\newcommand{\seq}[1]{\ensuremath{\left \langle #1 \right \rangle}}
\newcommand{\A}{\ensuremath{\mathcal{A}}}
\newcommand{\euclid}{\ensuremath{\mathbb{E}}}
\newcommand{\abs}[1]{\ensuremath{\left | #1 \right |}}
\newcommand{\floor}[1]{\ensuremath{\left \lfloor #1 \right \rfloor}}
\newcommand{\fig}[1]{\figurename~\ref{#1}}
\g@addto@macro\bfseries{\boldmath}
\title{Ham-Sandwich Cuts for Abstract Order Types%
\thanks{This work has been supported by the ESF EUROCORES programme EuroGIGA - ComPoSe.
A.P.\ is supported by Austrian Science Fund (FWF): I 648-N18.
Part of this work has been done while he was recipient of a DOC-fellowship of the Austrian Academy of Sciences at the Institute for Software Technology, Graz University of Technology, Austria.
A preliminary version of this paper appeared in the proceedings of ISAAC 2014~\cite{fp-hscao-14}.
Part of this work was presented in the PhD thesis~\cite{complexity_order_types} of the second author.}}
\author{Stefan~Felsner\thanks{Institut f\"ur Mathematik, Technische Universit\"at Berlin, Germany.
\texttt{felsner@math.tu-berlin.de}.}
\and Alexander~Pilz\thanks{Institute for Software Technology,
Graz University of Technology,
Austria.
\texttt{apilz@ist.tugraz.at}.}}
\date{\today}
\newtheorem{theorem}{Theorem}
\newtheorem{observation}{Observation}
\newtheorem{corollary}{Corollary}
\newtheorem{definition}{Definition}
\newtheorem{problem}{Problem}
\newtheorem{lemma}{Lemma}
\begin{document}

\maketitle

\begin{abstract}
\sloppypar{
The linear-time ham-sandwich cut algorithm of Lo, Matou\v{s}ek, and Steiger for bi-chromatic finite point sets in the plane works by appropriately selecting crossings of the lines in the dual line arrangement with a set of well-chosen vertical lines.
We consider the setting where we are not given the coordinates of the point set, but only the orientation of each point triple (the order type) and give a deterministic linear-time algorithm for the mentioned sub-algorithm.
This yields a linear-time ham-sandwich cut algorithm even in our restricted setting.
We also show that our methods are applicable to abstract order types.
}
\end{abstract}

\section{Introduction}
Goodman and Pollack investigated ways of partitioning the infinite number of point sets in the plane into a finite number of equivalence classes.
To this end they introduced \emph{circular sequences}~\cite{nondegenerate} and \emph{order types}~\cite{GoodmanPollack83}.
Two point sets $S_1$ and $S_2$ have the same \emph{order type} iff there exists a bijection between the sets s.t.\ a triple in~$S_1$ has the same orientation (clockwise or counterclockwise) as the corresponding triple in~$S_2$
(in this paper we only consider point sets in general position, i.e., no three points are collinear).

The order type determines many important properties of a point set, e.g., its convex hull and which spanned segments cross.
Determining the orientation of a point triple (called a \emph{sidedness query}) can be done in a computationally robust way~\cite{sign_of_determinants}.
Therefore, algorithms that base their decisions solely on sidedness queries allow robust implementations~\cite{boissonnat_snoeyink}.
Furthermore, this restriction is helpful for mechanically proving correctness of algorithms~\cite{formal_isabelle,formal_coq}.
Another possible advantage of restricting algorithms to sidedness queries allows, for some problems, to cope with degeneracies in point sets.
Edelsbrunner and M\"ucke~\cite{simplicity} describe a general framework for the so-called ``simulation of simplicity'';
in particular, they provide an efficient way to conceptually replace a point set with collinear point triples by one in general position s.t.\ all other orientations of point triples are preserved.

The duality between point sets and their dual line arrangements is a well-established tool in discrete and computational geometry.
Line arrangements can be generalized to pseudo-line arrangements, and many combinatorial and algorithmic questions that can be asked for line arrangements are also interesting for pseudo-line arrangements.
The order type of a point set is encoded in the structure of the dual line arrangement of a point set, in particular by the lines (and even by the number of lines~\cite{GoodmanPollack83}) above and below a crossing in the dual arrangement.
See~\cite{new_trends_gp} for an in-depth treatment of that topic.
The orientation of a triple of pseudo-lines can be obtained from the ordering of crossings just as for lines.
The triple-orientations fulfill certain axioms, and concepts like the convex hull can be defined for sets with appropriate triple-orientations~\cite{knuth} even though they may not be realizable by a point set.
Such a generalization of order types is known as \emph{abstract order type}.
Besides their combinatorial properties, algorithmic aspects of abstract order types have been studied.
Knuth devotes a monograph~\cite{knuth} to this generalization and its variants, in particular w.r.t.\ convex hulls.
Motivated by Knuth's open problems, Aichholzer, Miltzow, and Pilz~\cite{extreme_journal} show how to obtain, for a given pair $(a,b)$ of an abstract order type, the edges of the convex hull that are intersected by the supporting line of $ab$ in linear time, using only sidedness queries.
There appears to be no known reasonable algorithmic problem that can be formulated for both order types and abstract order types such that there is an algorithm for order types that is asymptotically faster than any possible algorithm for abstract order types (see also the discussion in~\cite[p.~29]{erickson_thesis}).
In this paper, we show that the ham-sandwich cut is another problem that does not provide such an example.
Apart from being of theoretical interest, abstract order types that are not realizable by point sets occur naturally when the point set is surrounded by a simple polygon and point triples are oriented w.r.t.\ the geodesics between them~\cite{geodesic_ot}.

A considerable fraction of problems in computational geometry deals with partitioning finite sets of points by hyperplanes while imposing constraints on both the subsets of the partition as well as on the hyperplanes.
In the plane, examples of this class of problems are finding, e.g., a ham-sandwich cut of a bi-chromatic point set~\cite{lo_matousek_steiger}, a four-way partitioning by orthogonal lines, and a six-way partitioning by three concurrent lines~\cite{roy_steiger}, as well as finding three concurrent halving lines that pairwise span an angle of $60^\circ$~\cite{wedges}.

Given a pair $(a,b)$ of points of a bi-chromatic point set $S$ of~$n$ points that are either red or blue, the supporting line of~$a$ and~$b$ is a \emph{ham-sandwich cut} if not more than half of the red and half of the blue points are on either side of $ab$.
This can be verified by using only sidedness queries (implying a brute-force algorithm running in $\Theta(n^3)$ time).
Megiddo~\cite{megiddo} presented a linear-time algorithm for the case in which the points of one color are separable from the points of the other color by a line.
Edelsbrunner and Waupotitsch~\cite{edelsbrunner_waupotitsch} gave an $O(n \log (\min\{n_\mathrm{r}, n_\mathrm{b}\}))$ time algorithm for the general case, with $n_\mathrm{r}$ red and $n_\mathrm{b}$ blue points.
Eventually, a linear-time algorithm was provided by Lo, Matou\v{s}ek, and Steiger~\cite{lo_matousek_steiger} for the general setting (abbrev.~\emph{LMS algorithm}).
Their approach generalizes to arbitrary dimensions.
Bose et al.~\cite{hamsan} generalize ham-sandwich cuts to points inside a simple polygon, obtaining a randomized $O((n+m) \log r)$ time algorithm, where $m$ is the number of vertices of the polygon, of which $r$ are reflex.
Ham-sandwich cuts belong to a class of problems in computational geometry that deal with partitioning finite sets of points by hyperplanes while imposing constraints on both the subsets of the partition as well as on the hyperplanes; see, e.g.,~\cite{roy_steiger} for algorithms for related problems.

The LMS algorithm works on the dual line arrangement of the point set and has to solve the following sub-problem.

\begin{problem}\label{problem_main_geometric}
Given a line arrangement~$\A$ in the plane and two lines~$p$ and~$q$ of that arrangement, let~$v$ be the vertical line passing through the crossing of~$p$ and~$q$.
For a subset $B$ of the lines in~$\A$ and an integer $k \leq |B|$, find a line $m \in B$ such that the $y$-coordinate of the point $v \cap m$ is of rank~$k$ in the sequence of $y$-coordinates of the finite point set $v \cap \bigcup_{b \in B} b$.
\end{problem}

\begin{wrapfigure}{r}{0.5\textwidth}
\centering
\includegraphics[width=0.48\textwidth]{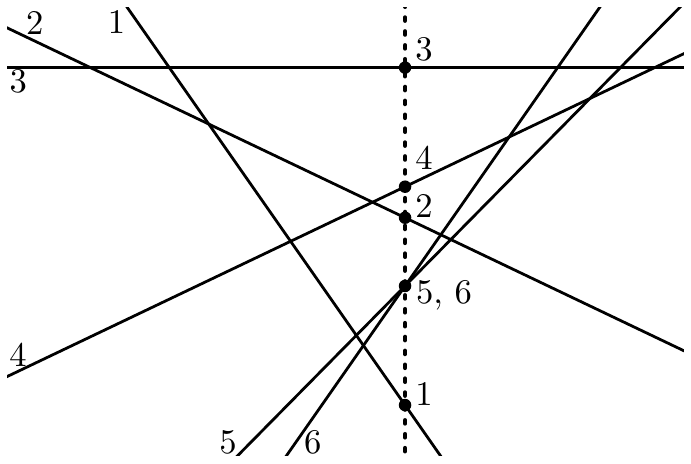}
\caption{Crossings along a vertical line.}
\label{fig_vertical_selection}
\end{wrapfigure}
This problem can be solved in linear time by directly applying the linear-time selection algorithm~\cite{blum} to the $y$-coordinates of the intersections of all lines in $B$ with~$v$.
Clearly, the order of the intersections of lines with a vertical line at a crossing is not a property of the order type represented by the arrangement (e.g., in \fig{fig_vertical_selection}, one could change the order of the lines~2 and 4 above the crossing between line 5 and line 6).
The order type only determines the set of lines above and below a crossing.

A reformulation of Problem~\ref{problem_main_geometric} for abstract order types faces two problems.
First, the vertical direction is not determined by the order type (this is a property of the circular sequence of the point set); even though we can represent the abstract order type by a pseudo-line arrangement in the Euclidean plane (where there is a vertical direction), there is, in general, an exponential number of different ways to draw a pseudo-line arrangement representing the abstract order type (w.r.t.\ the $x$-order of crossings), yielding too many different orders of the pseudo-lines along the vertical line through a crossing.
Second, even when such a vertical line is given, directly applying the linear-time selection algorithm requires that a query comparing the order of two pseudo-lines on the vertical line can be answered in constant time.

In this paper, we show how to overcome these two problems.
We define a ``vertical'' pseudo-line through each crossing in a pseudo-line arrangement and show how to select the pseudo-line of a given rank in the order defined by such a ``vertical'' pseudo-line.
We give the precise definition in Section~\ref{sec_levels_at_crossing} where we also examine properties of the construction.
The result is presented in terms of a (dual) pseudo-line arrangement in the Euclidean plane~$\euclid^2$.
However, in our model we are not given an explicit representation but are only allowed sidedness queries.
In Section~\ref{sec_levels_algorithm}, we first explain how the queries about a pseudo-line arrangement can be mapped to sidedness queries, and then give a linear-time algorithm for selecting a pseudo-line with a given rank.
Our result allows for replacing vertical lines in the LMS algorithm, showing that it also works for abstract order types.
An analysis of the LMS algorithm under this aspect is given in Section~\ref{sec_ham_sandwich_revisited}.
We obtain

\begin{theorem}\label{thm_abstract_ham_sandwich}
A ham-sandwich cut of an abstract order type can be found in linear time using only sidedness queries.
\end{theorem}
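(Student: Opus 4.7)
The plan is to revisit the LMS algorithm in the dual pseudo-line arrangement of the abstract order type and to check that every operation it performs is either a sidedness query in the primal, a constant-time combinatorial test on the arrangement, or an instance of Problem~\ref{problem_main_geometric}. Since the latter is solved in linear time by the constructions of Sections~\ref{sec_levels_at_crossing} and~\ref{sec_levels_algorithm}, and since LMS's pruning argument still guarantees a geometric decrease of the active set size, the total running time will be $O(n)$.

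Concretely, I would proceed in three steps. First, translate primal sidedness queries into queries about the dual pseudo-line arrangement: an orientation query on a primal triple $(a,b,c)$ becomes an above/below query of the pseudo-line dual to $c$ at the crossing of the pseudo-lines dual to $a$ and $b$, and this is directly determined by the abstract order type. Second, replace every Euclidean vertical line that LMS draws through a crossing by the ``vertical'' pseudo-line from Section~\ref{sec_levels_at_crossing}; the trapezoidal region that LMS uses to localize the ham-sandwich point then becomes a cell bounded by two pseudo-lines of the arrangement and two such pseudo-verticals. Third, in each iteration LMS picks a pseudo-line of prescribed rank along the pseudo-vertical through the current crossing using the linear-time selection algorithm of Section~\ref{sec_levels_algorithm}, and uses the outcome to discard a constant fraction of the red or blue candidate pseudo-lines.

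The main technical obstacle is to verify that LMS's pruning argument survives the replacement of true verticals by pseudo-verticals. LMS discards a constant fraction of candidates per round by arguing that the median level cannot pass through certain sub-regions of the trapezoid; the argument uses only the above/below relation of pseudo-lines at the defining crossings and the compatibility of the order along the pseudo-vertical with this relation, both of which are ensured by the construction in Section~\ref{sec_levels_at_crossing}. Once this compatibility and the invariant that the ham-sandwich point stays inside the maintained cell are checked in Section~\ref{sec_ham_sandwich_revisited}, the active set shrinks by a constant factor each round, so the total work across all rounds is bounded by a geometric series summing to $O(n)$. The initial reduction from the general case to the linearly separable case handled by Megiddo~\cite{megiddo} is likewise expressible through sidedness queries, and Theorem~\ref{thm_abstract_ham_sandwich} follows.
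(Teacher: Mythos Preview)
Your high-level plan is the same as the paper's, but two concrete points are off.

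First, there is no ``initial reduction from the general case to the linearly separable case handled by Megiddo''. The LMS algorithm treats the non-separable case directly; Megiddo's algorithm is mentioned in the paper only as a historical remark and, in fact, is pointed out to depend on the geometric realization just as much as the naive LMS does. So that final sentence should simply be removed.

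Second, and more substantively, your description skips the step that actually produces the ``trapezoid''. You say LMS ``picks a pseudo-line of prescribed rank along the pseudo-vertical through the current crossing'', but you never say how the algorithm chooses \emph{which} crossings to put pseudo-verticals through. In LMS this is Step~1: subdivide the current interval into a constant number of subintervals so that the $k_1$-level is tame inside each. In the paper this is done via an $\varepsilon$-approximation for the range space of pseudo-lines stabbed by a pseudo-segment (Corollary~\ref{cor_corridors_abs}); one has to argue that this range space has bounded VC-dimension for abstract order types (via semispaces and the Haussler--Welzl lemma), that the required subspace oracle can be implemented with sidedness queries, and that the crossings of the resulting constant-size sample, ordered by their pseudo-verticals (Lemma~\ref{lem_order_pseudo_verticals}), give the subintervals. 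The subsequent bound ``at most $3\varepsilon n_1$ pseudo-lines separate $d_\mathrm{l}^-$ from $d_\mathrm{r}^-$'', which is what makes the pruning work, uses precisely this approximation property---it is not just an ``above/below relation at the defining crossings''. Your sketch also does not address the subtlety that after pruning, pseudo-verticals in the thinned arrangement are different curves; the paper handles this by carrying the interval as the pair of crossings $g_\mathrm{l}, g_\mathrm{r}$ on the $k_1$-level rather than as pseudo-verticals.
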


The observation that the LMS algorithm in principle also works for pseudo-line arrangements has been used by Bose et al.~\cite{hamsan} for their randomized linear-time algorithm for geodesic ham-sandwich cuts.
However, their pseudo-lines are given by (weakly) $x$-monotone polygonal paths with a constant number of edges.
Hence, the intersection of such a path with a vertical line can be computed in constant time, like in the straight-line setting.
Their randomized algorithm runs in $O((n+m) \log r)$ time, where $n$ is the number of red and blue points, $m$ is the number of vertices of the polygon, of which $r$ are reflex.
Geodesic order types are a subset of abstract order types~\cite{geodesic_ot}.
When applying a result from~\cite{simplification} to get, after $O(m)$ preprocessing time, the orientation of each triple of points in a simple polygon in $O(\log r)$ time in combination with the ham-sandwich cut algorithm for abstract order types, we obtain a deterministic $O(n \log r + m)$ time algorithm for geodesic ham-sandwich cuts ``for free''.
(Note that this does not contradict optimal worst-case behavior shown by Bose et al.~\cite{hamsan} for their algorithm, as their analysis is parameterized by $(n+m)$ and~$r$.)
We emphasize that a detailed analysis of their approach may give a more fine-grained runtime analysis, and may allow for directly applying common derandomization techniques.
Nevertheless, our technique results in a complete separation of the part that is specific to the geodesic setting, implementing a general subroutine, and the ham-sandwich cut algorithm.

\newcommand{\GA}[1]{\ensuremath{\gamma_{#1}}}
\newcommand{\lv}{\ensuremath{\mathrm{lv}}}
\newcommand{\rk}{\ensuremath{\mathrm{rk}}}

A \emph{pseudo-line} is an $x$-monotone plane curve in~$\euclid^2$.
A \emph{pseudo-line arrangement} is the cell complex defined by the dissection of $\euclid^2$ by a set of pseudo-lines such that each pair of pseudo-lines intersects in exactly one point, at which they cross.
An arrangement is \emph{simple} if no three pseudo-lines intersect in the same point.
Throughout this paper, let~$\A$ be a simple arrangement of~$n$ pseudo-lines.
The two vertically unbounded cells in~$\A$ are called the \emph{north face} and the \emph{south face}.
The \emph{$k$-level} of~$\A$ is the set of all points that lie on a pseudo-line of~$\A$ and have exactly $k-1$ pseudo-lines strictly above them.
The level of a crossing $pq$ is denoted by $\lv(pq)$ (i.e., $pq$ is separated from the north face by $\lv(pq)-1$ pseudo-lines).
The \emph{upper envelope} of an arrangement is its 1-level, i.e., the union of the segments of pseudo-lines that are incident to the north face.

\section{Pseudo-Verticals}\label{sec_levels_at_crossing}
It will be convenient to consider all pseudo-lines being directed towards positive $x$-direction.
Let~$p$ and~$q$ be two pseudo-lines in~$\A$ and let~$p$ start above~$q$.
We denote the latter by $p \prec q$.
Our first aim is to define a pseudo-vertical through a crossing, i.e,  an object that can be used like a vertical line through a crossing in our abstract setting.

For a crossing $pq$ with $p \prec q$ let $\GA{pq}$ be a curve described by the following local properties.
Initially, $\GA{pq}$ passes through the crossing $pq$ and enters the cell~$C$ directly above~$pq$; see \fig{fig_local_vertical}~(a).
To define $\GA{pq}$ it is convenient to think of it as consisting of two parts.
The \emph{northbound ray} is the part starting at $pq$ leading to the north face while the \emph{southbound ray} connects $pq$ to the south face.
Starting from $pq$ the northbound ray follows~$p$ against its direction moved slightly into the interior of cell $C$.
In general the northbound ray of $\GA{pq}$ will be slightly above some line $a_i$ moving against the direction of~$a_i$.
When $a_i$ is crossed by a pseudo-line~$a_j$ we have two cases.
If $a_i$ is crossed from below, $\GA{pq}$ also crosses~$a_j$, and continues following $a_i$; see \fig{fig_local_vertical}~(b).
If~$a_i$ is crossed by~$a_j$ from above, $\GA{pq}$ leaves $a_i$ and continues following~$a_j$ against its direction; see \fig{fig_local_vertical}~(c).
This is continued until $\GA{pq}$ follows some line~$a_i$ and all crossings of~$A$ are to the right of the current position along~$\GA{pq}$.
At that point, $\GA{pq}$ continues vertically in positive $y$-direction to infinity (i.e., it crosses all lines $a$ with $a\prec a_i$ in decreasing $\prec$-order); see \fig{fig_local_vertical_last}~(a).
The southbound ray of $\GA{pq}$ is defined in a similar manner.
It follows some pseudo-lines in their direction but slightly below.
It starts with~$p$ and when following $a_i$ it changes to $a_j$ at a crossing only when $a_j$ is crossing from above (see \fig{fig_local_vertical}~(d) and \fig{fig_local_vertical}~(e)).
The final part may again consist of some crossings with lines $a$ with $a\prec a_i$ in decreasing $\prec$-order.
\fig{fig_gamma_whole_example_straight} gives an example of a pseudo-vertical in the arrangement of \fig{fig_vertical_selection}, and \fig{fig_gamma_whole_example} shows a pseudo-vertical in a wiring diagram.

\begin{figure}[ht]
\centering
\includegraphics[scale=.91]{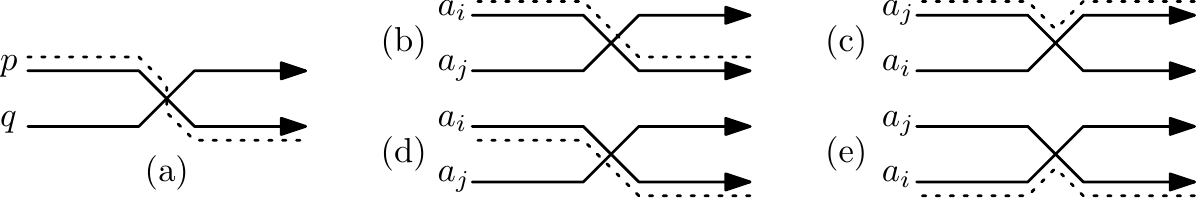}
\caption[Local definition of a pseudo-vertical~$\GA{pq}$.]{
Local definition of a pseudo-vertical~$\GA{pq}$.}
\label{fig_local_vertical}
\end{figure}

\begin{figure}
\centering
\includegraphics{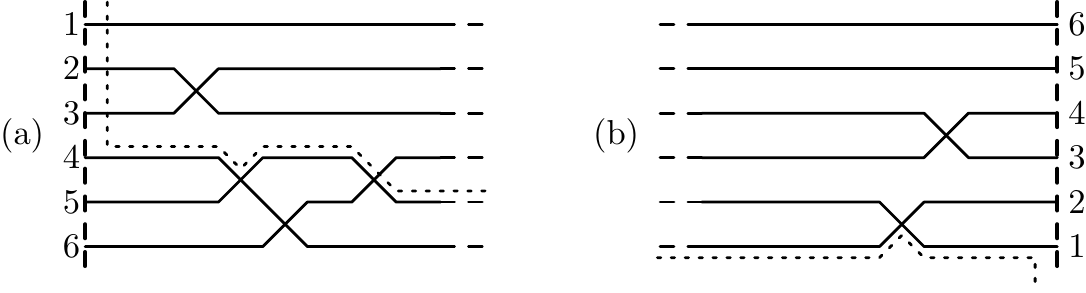}
\caption[The first and the last pseudo-lines of an arrangement defining~$\GA{pq}$.]{The first~(a) and the last~(b) pseudo-lines of an arrangement defining~$\GA{pq}$~(dotted).%
}
\label{fig_local_vertical_last}
\end{figure}

\begin{figure}[ht]
\centering
\includegraphics{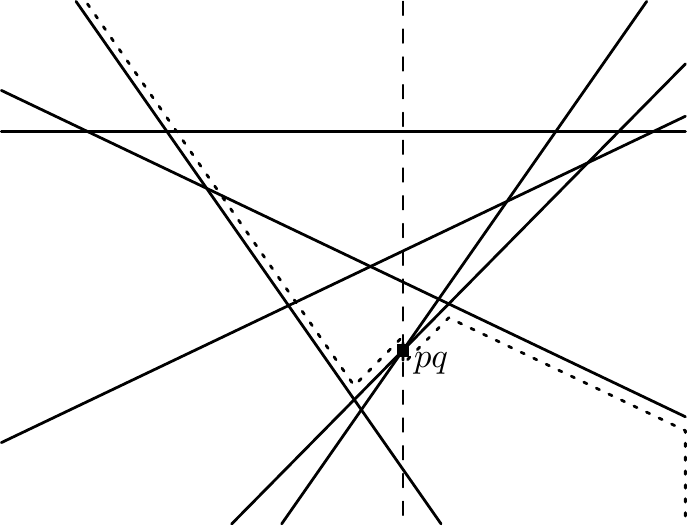}
\caption{A pseudo-vertical~$\GA{pq}$ (dotted) in an arrangement of straight lines.}
\label{fig_gamma_whole_example_straight}
\end{figure}

We call $\GA{pq}$ a \emph{pseudo-vertical} and, in the following, identify several properties of such a curve.
Note that, while we used the (rather informal) notion of ``following'' a pseudo-line, $\GA{pq}$ is actually defined by the cells it traverses (i.e., two paths in the dual graph of the cell complex starting at the cells above and below~$pq$).
As~$\GA{pq}$ always follows a pseudo-line of $\A$ or continues in a vertical direction, we note that $\GA{pq}$ is $x$-monotone.

\begin{figure}
\centering
\includegraphics[width=.9\textwidth]{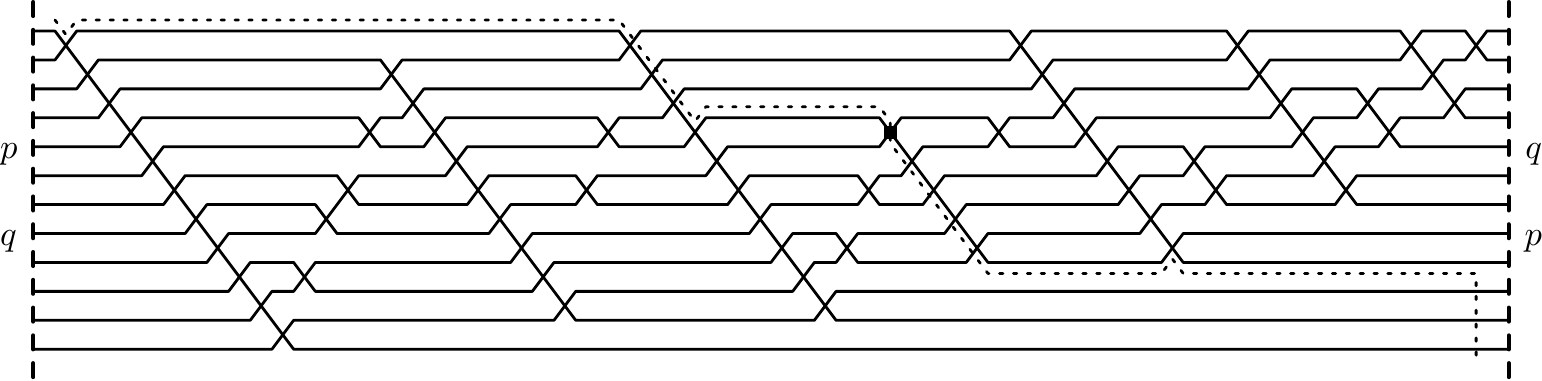}
\caption{A pseudo-vertical~$\GA{pq}$ in a pseudo-line arrangement.
The $x$-order or the crossings represents the order induced by the pseudo-verticals.}
\label{fig_gamma_whole_example}
\end{figure}

\subsection{Properties of a Pseudo-Vertical}
As~$\GA{pq}$ always follows a pseudo-line of $\A$ or continues in a vertical direction, we can observe the following.
\begin{observation}
For any crossing~$pq$ in a pseudo-line arrangement~$\A$, the curve $\GA{pq}$ is $x$-monotone.
\end{observation}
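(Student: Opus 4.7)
The plan is to decompose $\GA{pq}$ into its constituent local pieces, observe that each piece is $x$-monotone, and then check that the pieces concatenate without any reversal of the $x$-direction. From the construction, the curve consists of three kinds of pieces: (i) arcs that closely track some pseudo-line $a_i$ (slightly above it on the northbound ray, slightly below it on the southbound ray), (ii) isolated transverse crossings of single pseudo-lines as in Figure~\ref{fig_local_vertical}(b),(d), and (iii) the two terminal vertical rays heading to the north and south faces.

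The first step uses only the defining property of a pseudo-line, namely that it is $x$-monotone: any arc of $\GA{pq}$ that tracks $a_i$ within a single cell of~$\A$ inherits $x$-monotonicity from $a_i$, provided the displacement ``slightly above/below'' is taken small enough to keep the arc inside that cell until the next transition. Along the northbound ray, every tracking arc follows its pseudo-line against its direction, so its $x$-coordinate is strictly decreasing; along the southbound ray, every tracking arc follows its pseudo-line in its direction, so $x$ is strictly increasing. The terminal vertical rays have constant $x$.

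The second step is to glue the pieces together. Parameterize $\GA{pq}$ from its north endpoint, down to~$pq$, and on to its south endpoint. Along the reversed northbound portion the $x$-coordinate is weakly increasing (since it was weakly decreasing from $pq$ outward), and along the southbound portion it is again weakly increasing, so the whole parameterization is weakly increasing in $x$. The only thing to verify is that no reversal can be hidden at a transition between consecutive pieces. But each transition described in Figure~\ref{fig_local_vertical} occurs at (or arbitrarily close to) a vertex of $\A$, where the two pseudo-lines involved share the same $x$-coordinate, or at the interface between a tracking arc and a terminal vertical ray, where $x$ is continuous; in both cases $x$-monotonicity carries over across the junction.

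The main technical point, and the only one that requires care, is this last gluing argument: one must confirm that the local description of $\GA{pq}$ in Figure~\ref{fig_local_vertical} forces each transition to happen at a crossing of $\A$ and that the curve can be chosen close enough to its tracked pseudo-line to stay in one cell between transitions. Both are immediate from the construction, so the observation follows.
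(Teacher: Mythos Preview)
Your argument is correct and follows essentially the same approach as the paper, which simply notes in one sentence that $\GA{pq}$ always either follows a pseudo-line of~$\A$ (hence $x$-monotone by definition) or continues in a vertical direction. Your decomposition into tracking arcs, transverse crossings, and terminal vertical rays, together with the gluing check, is just a careful unpacking of that remark.
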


The following observation can easily be made by visualizing the arrangement as a wiring diagram.

\begin{observation}\label{obs_level_increasing}
The number of pseudo-lines above a point moving along~$\GA{pq}$ in positive $x$-direction is a monotone function, it increases at every crossing of $\GA{pq}$ with a pseudo-line of~$\A$.
\end{observation}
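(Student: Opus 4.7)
The plan is to split $\GA{pq}$ into its constituent pieces and to argue locally at each one. The observation has two halves: constancy of the count between consecutive crossings of $\GA{pq}$ with pseudo-lines of $\A$, and a strict increase at each such crossing.

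For constancy, I would observe that on any maximal subarc of $\GA{pq}$ containing no crossing with a pseudo-line of $\A$, the curve either stays slightly above or slightly below a fixed pseudo-line $a_i \in \A$ (the sloped segments), or it runs along one of the two vertical tails in the unbounded strip to the left or right of all crossings of~$\A$. In none of these situations can any pseudo-line of $\A$ swap its above/below position relative to the moving point, so the count of pseudo-lines strictly above $\GA{pq}$ is constant on such a subarc.

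For the crossings, I would distinguish three kinds of events. At a crossing with $a_j$ on the sloped part of either ray, $\GA{pq}$ follows some $a_i$ (slightly above on the northbound ray, slightly below on the southbound ray), and by the case distinction in the construction a crossing of $\GA{pq}$ with $a_j$ occurs exactly when $a_j$ crosses $a_i$ from below, as in Fig.~\ref{fig_local_vertical}(b) and its southbound analog. Just to the left of the crossing, $a_j$ lies below $a_i$, so $\GA{pq}$, being infinitesimally offset from $a_i$, is above $a_j$; just to the right, $a_j$ lies above $a_i$ and hence above $\GA{pq}$. Thus $a_j$ passes from below to above the moving point, yielding a $+1$ jump. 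The switch configuration in Fig.~\ref{fig_local_vertical}(c) and its southbound analog is not a crossing of $\GA{pq}$ with any pseudo-line at all, so it is consistent with the constancy statement. Along a vertical tail, the pseudo-lines of $\A$ appear in $\prec$-order at that $x$-coordinate; moving down, each is crossed exactly once, with the line passing from below to above the moving point. Finally, at $pq$ itself the ray passes from slightly above $p$ (and hence above $q$) to slightly below $p$ (and hence below $q$), so both $p$ and $q$ jump from below to above simultaneously, matching the double crossing through~$pq$.

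The only bookkeeping that requires care is making the ``slightly above/below'' offset precise enough that each local above/below comparison is unambiguous; this is immediate once one records that $\GA{pq}$ is defined to traverse specific cells of the arrangement, which pins down its vertical position relative to the bounding pseudo-lines of those cells.
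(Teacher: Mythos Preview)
Your argument is correct. The paper does not actually prove this observation; it only remarks that it ``can easily be made by visualizing the arrangement as a wiring diagram,'' whereas you supply the explicit local case analysis (constancy on cell-traversing arcs, the $+1$ jump in the crossing configurations of Fig.~\ref{fig_local_vertical}(b)/(d), no jump in the switch configurations of Fig.~\ref{fig_local_vertical}(c)/(e), the vertical tails, and the $+2$ at $pq$). This is the same underlying idea---tracking which side of each pseudo-line the moving point is on---just made precise, so there is no substantive divergence from the paper's intended reasoning.
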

\begin{lemma}\label{lem_vertical_pseudo_line}
For any crossing~$pq$ in a pseudo-line arrangement~$\A$, the curve $\GA{pq}$ is a pseudo-line such that $\A$ can be extended by~$\GA{pq}$ to a new (non-simple) pseudo-line arrangement.
\end{lemma}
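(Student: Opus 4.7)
The plan is to verify two claims: (i) $\GA{pq}$ is an $x$-monotone plane curve, hence a pseudo-line in the sense used in the paper; and (ii) $\GA{pq}$ meets every other pseudo-line of $\A$ in exactly one point, so that $\A\cup\{\GA{pq}\}$ is a pseudo-line arrangement. Item (i) is immediate from the local construction and is already recorded in the observation just preceding the lemma, so the real content lies in item~(ii).

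Before counting crossings I would check that the construction actually produces a complete curve from the north face to the south face. Along the northbound ray every switch from one pseudo-line to another happens at a crossing of $\A$, and from then on the ray stays strictly to the left of that crossing; since $\A$ has only finitely many crossings, after finitely many switches no crossing of $\A$ lies to the left of the current position and the construction directs $\GA{pq}$ vertically upward into the north face. The southbound ray is symmetric.

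The heart of the proof is a counting argument using Observation~2. Parameterize $\GA{pq}$ from the north face to the south face and let $f$ denote the number of pseudo-lines of $\A$ strictly above the current point. Initially $f=0$, ultimately $f=n$, and by Observation~2 the function $f$ is monotone non-decreasing, jumping by exactly $+1$ at each transversal crossing of $\GA{pq}$ with a single pseudo-line of $\A$. No pseudo-line can be crossed twice: after its first crossing with $\GA{pq}$ it lies above $\GA{pq}$, and a second crossing would force $f$ to decrease. Counting increments therefore forces $\GA{pq}$ to meet each of the $n$ pseudo-lines of $\A$ in exactly one point; the meetings with $p$ and with $q$ both occur at the original crossing $pq$, which accounts for two of the increments at that triple point. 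This shows that $\A\cup\{\GA{pq}\}$ is a pseudo-line arrangement, and it is non-simple because $p$, $q$, and $\GA{pq}$ all pass through $pq$.

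The subtle point I expect is topological: each meeting between $\GA{pq}$ and a pseudo-line of $\A$ must be a transversal single-point crossing rather than a tangency or an overlapping arc. This I would read off from the local pictures in \fig{fig_local_vertical}: the ``slightly above''/``slightly below'' shift places $\GA{pq}$ in the open cell immediately adjacent to the pseudo-line it is currently following, so crossings of $\A$ bounding that cell translate into clean transversal crossings with $\GA{pq}$. The concluding vertical segments are handled similarly: no crossings of $\A$ remain to their side, so the remaining pseudo-lines are crossed transversally one by one in the reverse of their $\prec$-order.
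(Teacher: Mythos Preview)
Your proof is correct and follows essentially the same approach as the paper. The paper's argument is terser---it observes that $\GA{pq}$ reaches vertical infinity in both directions and hence crosses every pseudo-line at least once, invokes Observation~\ref{obs_level_increasing} to bound the number of crossings by~$n$, and concludes that each pseudo-line is crossed exactly once---but your level-counting argument via the function $f$ is the same idea unpacked, and your additional remarks on termination, transversality, and the triple point at $pq$ are sound refinements that the paper leaves implicit.
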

\begin{proof}
Let~$n$ be the number of pseudo-lines in~$\A$.
Since $\GA{pq}$ continues to vertical infinity in both positive and negative $y$-direction, it crosses every pseudo-line of~$\A$ at least once.
From Observation~\ref{obs_level_increasing}, it follows that~$\GA{pq}$ crosses at most~$n$ pseudo-lines.
As~$\GA{pq}$ is an $x$-monotone curve that crosses each pseudo-line of~$\A$ exactly once, an extension of~$\A$ is again a (non-simple) pseudo-line arrangement.
\end{proof}

We say that pseudo-line $a$ is above a crossing $pq$ if $a$ is intersected by the northbound ray of $\GA{pq}$.
If $a$ is intersected by the southbound ray of $\GA{pq}$ it is considered to be below $pq$.
(Note that this is equivalent to $a$ separating $pq$ from the north face or the south face, respectively.)
Just like a vertical line in a line arrangement, a pseudo-vertical defines a total order on the pseudo-lines of~$\A$ by the order it crosses them.
We denote the rank of a pseudo-line $m \in \A$ in this order by $\rk_{{pq}}(m)$.
The following lemma shows how we can determine the rank of an element.

Let $L(pq)$ be the set of pseudo-lines in~$\A$ such that each~$a \in L(pq)$ is below $pq$ and $a \prec p$.

\begin{lemma}\label{lem_upper_envelope}
The northbound ray of $\GA{pq}$ starting from the crossing $pq$ until reaching an unbounded cell for the first time, follows the upper envelope of the sub-arrangement defined by \mbox{$L(pq) \cup \{p\}$}.
\end{lemma}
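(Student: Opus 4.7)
The plan is to show that the horizontal part of the northbound ray of $\GA{pq}$, which I denote $N$, coincides with the upper envelope $U$ of the sub-arrangement $L(pq) \cup \{p\}$ on the $x$-range it covers. I maintain the invariant that at every $x$-coordinate, the pseudo-line $a_i$ that $N$ currently follows is the pointwise maximum of $L(pq) \cup \{p\}$ at $x$; since $N$ and $U$ are both $x$-monotone and both start at $pq$ on $p$, this invariant is equivalent to $N = U$ on the relevant $x$-range.

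The base case is immediate: at $x_{pq}$, $N$ follows $p$, and every pseudo-line of $L(pq)$ lies below $pq$ by definition, so $p$ is indeed the maximum. The induction then steps through the crossings encountered by $N$ while moving leftward. If the next crossing at $x_0$ is a ``cross-through'' (the other pseudo-line $a_j$ lies above $a_i$ for $x > x_0$, so $N$ crosses $a_j$ and stays on $a_i$), then the inductive maximality of $a_i$ immediately forces $a_j \notin L(pq) \cup \{p\}$, and the maximum over $L(pq) \cup \{p\}$ is unchanged just left of $x_0$; this case is routine.

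The main obstacle is the ``swing'' case, where $a_j$ lies above $a_i$ for $x < x_0$ and $N$ leaves $a_i$ to follow $a_j$: here I must prove $a_j \in L(pq) \cup \{p\}$. The argument uses the unique-crossing property of pseudo-lines twice. First, $a_i \in L(pq) \cup \{p\}$ gives $a_i \leq p$ at $x_{pq}$, and $a_j < a_i$ for all $x > x_0$ (since $a_i,a_j$ cross only at $x_0$); hence $a_j < p$ at $x_{pq}$, so $a_j$ is below $pq$. Second, the inductive hypothesis gives $a_i \geq p$ just right of $x_0$, and a short case analysis (ruling out $a_j = p$ when $a_i \in L(pq)$, since otherwise Case~A would force $a_i > p$ for all $x > x_0$, contradicting $a_i < p$ at $x_{pq}$) shows $a_i$ and $p$ do not cross at $x_0$, so $a_i \geq p$ persists to just left of $x_0$; therefore $a_j > a_i \geq p$ just left of $x_0$. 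Combined with $a_j < p$ at $x_{pq}$ and unique-crossing, this forces $a_j > p$ at $-\infty$, i.e., $a_j \prec p$. Together these two facts yield $a_j \in L(pq)$.

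Finally, to verify $a_j$ is the new maximum just left of $x_0$: any other $a' \in L(pq) \cup \{p\}$ satisfies $a' \leq a_i$ just right of $x_0$ by induction, and by simplicity of $\A$ the line $a'$ does not cross either $a_i$ or $a_j$ at $x_0$, so $a' \leq a_i = a_j$ at $x_0$ extends by continuity to $a' \leq a_j$ just left of $x_0$. This closes the induction and establishes $N = U$ on the horizontal part of the ray, which is exactly the lemma.
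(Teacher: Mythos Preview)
Your proof is correct and follows essentially the same inductive scheme as the paper: both argue along the sequence of pseudo-lines $p=a_1,a_2,\ldots$ that the northbound ray follows, maintaining the invariant that the current position lies on the upper envelope of $L(pq)\cup\{p\}$.

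The differences are in the local arguments at each step. For the cross-through case, the paper argues that a pseudo-line $r$ crossed by the northbound ray must be above $pq$ (otherwise $\GA{pq}$ would cross it twice, contradicting Observation~\ref{obs_level_increasing}); you instead use the maximality invariant directly to conclude $r\notin L(pq)\cup\{p\}$. For the swing case, the paper only argues that $a_{i+1}$ is below $pq$ (otherwise it would cross $a_i$ twice) and leaves $a_{i+1}\prec p$ implicit; you spell this out via the comparison $a_j>a_i\geq p$ just left of $x_0$ together with $a_j<p$ at $x_{pq}$. Your version is more explicit, but the route is the same. (Incidentally, your argument for $a_j\prec p$ can be shortened: since $a_j>a_i$ for all $x<x_0$ and $a_i\in L(pq)\cup\{p\}$ gives $a_i\geq p$ at $-\infty$, you get $a_j>p$ at $-\infty$ directly, without the detour through ``just left of $x_0$'' and the case analysis ruling out $a_j=p$.)
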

\begin{proof}
The proof is by induction on the sequence $\seq{p=a_1, a_2, \dots }$ of pseudo-lines that we follow.
Clearly, the point $pq$ is on the upper envelope of~$L(pq) \cup \{p\}$.
Suppose we traverse $\GA{pq}$ in negative $x$-direction, following a pseudo-line $a_i \in L(pq) \cup \{p\}$.
If~$\GA{pq}$ crosses a pseudo-line~$r$ (i.e., $r$ crosses $a_i$ from below), then~$r$ cannot be below $pq$ as it would have to cross~$\GA{pq}$ again.
If a pseudo-line $a_{i+1}$ crosses $a_i$ from above, then $a_{i+1}$ cannot be above $pq$ as it would have to cross $a_i$ again.
Further, $\GA{pq}$ continues on $a_{i+1}$, keeping the invariant that no element of $L(pq)$ is above the point traversing~$\GA{pq}$.
\end{proof}

Note that every pseudo-line that passes through the upper envelope (from below) will cross~$\GA{pq}$ immediately after that crossing.

\begin{corollary}\label{cor_envelope_order}
Let~$m$ be a pseudo-line in~$\A$ that is above $pq$ and for which there exists a pseudo-line $a \in L(pq) \cup \{p\}$ such that $a \prec m$, i.e., $m$ crosses the upper envelope of $L(pq) \cup \{p\}$ by crossing some $e \in L(pq) \cup \{p\}$ with $e \prec m$.
Then the rank $\rk_{pq}(m)$ equals the number of pseudo-lines above the crossing of $e$ and $m$.
\end{corollary}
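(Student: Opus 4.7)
The plan is to combine an intermediate-value argument with Lemma~\ref{lem_upper_envelope}, the remark immediately preceding the statement, and Observation~\ref{obs_level_increasing}.

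For the existence of the envelope crossing $em$, I would argue as follows. Since $pq$ lies on $p \in L(pq) \cup \{p\}$, the point $pq$ belongs to the upper envelope of $L(pq) \cup \{p\}$ at abscissa $x(pq)$, so $m$, which is above $pq$, lies above the envelope there. On the other hand some $a \in L(pq) \cup \{p\}$ satisfies $a \prec m$, which places $a$ (and hence the envelope) above $m$ at $x = -\infty$. Because all pseudo-lines involved are $x$-monotone, $m$ must meet the envelope at some $x_0 < x(pq)$, crossing a pseudo-line $e \in L(pq) \cup \{p\}$ there from below. Since $e$ and $m$ cross only once (the arrangement is simple), $e$ is above $m$ for every $x < x_0$, yielding $e \prec m$.

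Next I would locate the crossing of $\GA{pq}$ with $m$. Lemma~\ref{lem_upper_envelope} tells us that the northbound ray of $\GA{pq}$ follows the envelope of $L(pq) \cup \{p\}$ from $pq$ to the first unbounded cell, and the remark preceding the statement says that every pseudo-line passing through this envelope from below is crossed by $\GA{pq}$ right at that event. Applied to $m$ at $em$, this places the crossing of $\GA{pq}$ with $m$ combinatorially at $em$. Finally Observation~\ref{obs_level_increasing} says that as one traverses $\GA{pq}$ in the positive-$x$ direction the number of pseudo-lines above the current point is monotone and jumps by exactly one at each crossing; consequently $\rk_{pq}(m)$ equals the number of pseudo-lines above $\GA{pq}$ at the instant of this crossing. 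Since at that instant $\GA{pq}$ sits infinitesimally above $e$, the pseudo-lines strictly above $\GA{pq}$ coincide with the pseudo-lines strictly above the point $em$, which is the count stated in the corollary.

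The main obstacle I expect is the infinitesimal bookkeeping in this last step: one has to verify that the crossing of $\GA{pq}$ with $m$ sits just past $em$ in the positive $x$-direction, so that $m$ itself is not yet counted among the pseudo-lines ``above $\GA{pq}$'' at the instant the count is read; this is what distinguishes the desired answer from an off-by-one alternative.
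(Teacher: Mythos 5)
Your argument is correct and is essentially the derivation the paper intends: the corollary is stated without proof precisely because it is meant to follow from Lemma~\ref{lem_upper_envelope}, the note that a pseudo-line passing through the upper envelope from below crosses $\GA{pq}$ immediately after that event, and Observation~\ref{obs_level_increasing}, which is exactly the chain you assemble. The off-by-one you flag at the end is an ambiguity in the paper's phrasing rather than a gap in your proof: ``the number of pseudo-lines above the crossing of $e$ and $m$'' should be read as the level $\lv(em)$ (compare the paper's later definition of $\lv(vw)$ as ``the number of pseudo-lines \dots above the crossing $vw$'', and Observation~\ref{obs_rank_above_envelope}, where the rank is likewise a count of lines above plus one), which corresponds to reading the count just before $\GA{pq}$ crosses $m$ so that $m$ itself is not yet included --- exactly the resolution you propose.
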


If~$m$ does not intersect the upper envelope of $L(pq) \cup \{p\}$ at some point in negative $x$-direction of $pq$, it crosses~$q$ before crossing any of the pseudo-lines of~$L(pq)$.
Therefore, we observe:

\begin{observation}\label{obs_rank_above_envelope}
If a pseudo-line $m$ starts above every pseudo-line in~$L(pq)$, then the rank of~$m$ along $\GA{pq}$ is given by the number of pseudo-lines starting above~$m$ increased by~1, i.e., $|\{a \in \A : a \prec m \}|+1$.
\end{observation}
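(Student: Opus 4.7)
The plan is to locate $m$ in the final vertical portion of the northbound ray of $\GA{pq}$ and then read off its rank directly from how that portion is built. By Lemma~\ref{lem_upper_envelope}, starting at $pq$ and moving leftward the northbound ray tracks the upper envelope of $L(pq) \cup \{p\}$ until it enters an unbounded cell; thereafter the construction in Section~\ref{sec_levels_at_crossing} makes it ascend vertically, crossing all pseudo-lines $a \in \A$ with $a \prec a_{\min}$ in decreasing $\prec$-order going upward, where $a_{\min}$ denotes the $\prec$-minimum of $L(pq)\cup\{p\}$. Equivalently, when $\GA{pq}$ is traversed from the north face downward, this vertical tail crosses these pseudo-lines in increasing $\prec$-order.

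The first step is to show $m \prec a_{\min}$, so that $m$ is crossed in the vertical tail. When $L(pq)\neq \emptyset$, every element of $L(pq)$ starts above $p$ and so $a_{\min}\in L(pq)$; the hypothesis then yields $m\prec a_{\min}$ at once. When $L(pq)=\emptyset$, we have $a_{\min}=p$, and the sentence preceding the observation (which identifies $m$ as a pseudo-line that does not intersect the upper envelope of $L(pq)\cup\{p\}$ to the left of $pq$) forces $m\prec p$. Either way, $m$ is crossed in the vertical tail of the northbound ray.

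The final step is to verify that no pseudo-line crossed outside the vertical tail can precede $m$ in the $\prec$-order. Any line $r$ crossed while the ray follows the upper envelope crosses some envelope member $a_i\in L(pq)\cup\{p\}$ from below, whence $a_i\prec r$ and thus $m\prec a_{\min}\preceq a_i\prec r$. Hence no such $r$ is $\prec m$. Inside the vertical tail itself, the pseudo-lines crossed strictly above $m$ in the north-to-south direction are precisely those $b\in \A$ with $b\prec m$, since each such $b$ satisfies $b\prec a_{\min}$ and therefore also lies in the tail. Counting these plus $m$ itself yields $\rk_{pq}(m)=|\{a\in \A : a\prec m\}|+1$, as desired. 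The only delicate point is reconciling the formal hypothesis with the $L(pq)=\emptyset$ corner case by appealing to the preceding paragraph; the rest is a direct consequence of Lemma~\ref{lem_upper_envelope} and the explicit description of the vertical tail.
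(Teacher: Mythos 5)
Your proof is correct and is essentially the elaboration that the paper leaves implicit: the statement is given as an unproved Observation whose intended justification is exactly your argument, namely that $m$ is met in the terminal vertical portion of the northbound ray (after the envelope-following phase of Lemma~\ref{lem_upper_envelope}), where the crossing order coincides with the $\prec$-order, while every pseudo-line crossed during the envelope phase crosses some $a_i \in L(pq)\cup\{p\}$ from below and hence satisfies $m \prec a_i \prec r$. Your treatment of the $L(pq)=\emptyset$ corner case is also apt and addresses a real imprecision: the hypothesis should be ``starts above every pseudo-line in $L(pq)\cup\{p\}$'' (as it is phrased where the observation is invoked in Section~\ref{sec_selecting_pseudo_line}), since otherwise a pseudo-line $m$ with $p \prec m$ that crosses $p$ to the left of $pq$ is a counterexample, whereas for $L(pq)\neq\emptyset$ the stated hypothesis already yields $m\prec p$ by transitivity, exactly as you observe.
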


\subsection{Ordering Pseudo-Verticals}\label{sec_ordering_pseudo_verticals}

Given two different crossings $pq$ and $rs$ in~$\A$, it is easy to see that $\GA{pq}$ and $\GA{rs}$ may follow the same part of a pseudo-line.
Nevertheless, one can show that $\GA{pq}$ and $\GA{rs}$ will never intersect when drawn appropriately.
See \fig{fig_non_crossing_vertical} for an illustration accompanying the proof of the following lemma.

\begin{lemma}\label{lem_vertical_non_crossing}
The set of pseudo-verticals for all crossings of a pseudo-line arrangement~$\A$ can be drawn such that no two pseudo-verticals intersect.
\end{lemma}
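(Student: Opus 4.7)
The plan is to draw each pseudo-vertical as an infinitesimal offset of its rails: the northbound ray of $\GA{pq}$ lies slightly above the pseudo-lines it follows, and the southbound ray lies slightly below. With this convention, a northbound ray of one pseudo-vertical and a southbound ray of another cannot interfere on any shared rail since they are drawn on opposite sides of that rail. The problem thus reduces to showing that, on each pseudo-line $a$, all northbound rays passing through $a$ can be drawn as a non-crossing family slightly above $a$ (and symmetrically for southbound rays, slightly below).

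The key observation is that the switching rule is deterministic in the current rail: a northbound ray on $a$ leaves $a$ at the next crossing to the left where $a$ is crossed from above, and otherwise stays on $a$. Consequently, any two northbound rays simultaneously on $a$ move in lockstep and leave $a$ together. The intervals of $a$ followed by the different northbound rays therefore form a ``comb'' sharing a common left endpoint, with various right endpoints (each right endpoint being either a crossing where another pseudo-line crosses $a$ from below, or a starting crossing $pq$ with $a = p$). I would draw this comb as a stack of nested curves slightly above $a$, ordered so that the ray that has been on $a$ longest sits highest. At a crossing on $a$ where $a$ is crossed from below, a new ray joins at the bottom of the stack while each existing ray crosses the other pseudo-line transversally at its own height, and hence at its own $x$-coordinate in a neighborhood of the crossing, so no collision occurs. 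At the common left endpoint of the comb, all rays switch together to the next rail, transitioning through the cell lying above both crossing pseudo-lines in a manner that preserves the stacking order.

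The main obstacle will be verifying consistency at crossings where rays of different provenance meet: for instance, at a crossing $pq$ that simultaneously starts a new northbound ray of $\GA{pq}$, receives any rays on $q$ switching to $p$ at this leaving event of $q$, and is traversed by any rays already on $p$ coming from the right. One needs to combine these three groups into one coherent stack just to the left of $pq$; the synchronized evolution imposed by the switching rule then carries this stacking consistently through all subsequent crossings as argued in the previous paragraph. Finally, the terminal vertical segments reaching the north or south face can be placed at pairwise distinct $x$-coordinates, and the whole argument applies symmetrically (reflected vertically) to the southbound rays, yielding the required non-crossing drawing of all pseudo-verticals.
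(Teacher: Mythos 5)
Your reduction to rays stacked along shared rails, and the observation that the leftward switching rule is memoryless (so all northbound rays simultaneously on a rail $a$ leave it together, forming a comb with a common left endpoint), are sound and genuinely different from the paper's cell-by-cell argument. However, the stacking invariant you propose --- ``the ray that has been on $a$ longest sits highest,'' with newcomers joining at the bottom --- is wrong, and it fails exactly at the step you yourself defer as ``the main obstacle.'' At a crossing $pq$ with $p \prec q$ three groups merge onto $p$: the newborn ray $\GA{pq}$, the rays already on $p$ (which cross $q$ here and stay on $p$), and the rays arriving along $q$ (which switch to $p$ here without crossing $q$ or $p$). Just to the right of the crossing, the rays already on $p$ lie in the quadrant between $p$ and $q$, i.e.\ \emph{below} the pseudo-line $q$, while the arriving rays lie \emph{above} $q$; since the arriving rays never cross $q$, they can only be inserted at the \emph{top} of the merged stack. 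Your rule would place them at the bottom and thereby force each arriving ray to cross every ray already on $p$. A concrete instance with lines $p\prec q\prec s$: at the crossing $pq$ the ray $\GA{ps}$ (already on $p$) must end up \emph{below} $\GA{qs}$ (arriving from $q$), not above it as seniority on $p$ would dictate.

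The fix is to replace your invariant by the merge rule forced by a small-disk routing argument at the crossing: bottom to top, the stack on $p$ left of $pq$ is $\GA{pq}$, then the stack inherited from $p$ (internal order preserved), then the stack inherited from $q$ (internal order preserved). One checks that with this rule the local picture at every crossing is realizable without intersections, and since the rule is applied uniformly, the global drawing is consistent; this is the content that your proposal leaves open. For comparison, the paper sidesteps rail bookkeeping entirely: it argues per cell that two northbound (resp.\ two southbound) rays visiting a common cell enter (resp.\ leave) it through the same bounding pseudo-line and hence traverse the same sequence of cells from then on, while a northbound and a southbound ray in a common cell hug its floor and its ceiling, respectively, and so cannot meet.
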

\begin{proof}
Recall that the pseudo-verticals are fully defined by the sequence of cells they traverse.
Further, recall that each bounded cell has a unique leftmost and rightmost crossing.
For two pseudo-verticals to intersect, they have to enter a common cell~$C$.

Suppose first that $C$ is bounded.
Observe that, when traversing, say, $\GA{pq}$ in positive $x$-direction, then~$\GA{pq}$ enters~$C$ from above.
Let $C$ be between the levels $k$ and $(k+1)$.
If the current part of~$\GA{pq}$ is northbound, it enters~$C$ at the $k$-level through the pseudo-line defining the leftmost crossing of~$C$ (in~$\A$).
If it is southbound, it leaves $C$ at the $(k+1)$-level through the pseudo-line defining the rightmost crossing of~$C$.
For two pseudo-verticals to cross inside $C$, they would have to enter and leave~$C$ through four different pseudo-lines (otherwise, we could draw them without crossing in~$C$, probably changing their relative order in the next cell).
But this can only happen when one pseudo-vertical is northbound and the other is southbound in that cell (as otherwise they would either enter or leave~$C$ through the same pseudo-line), and in that case, there cannot be a crossing inside~$C$, as the pseudo-lines follow the different levels.
Once two, say, southbound rays meet in a cell (i.e., they leave a cell through the same pseudo-line of~$\A$), they follow the same pseudo-lines until reaching the south face (i.e., they pass through the same sequence of cells), and hence they can be drawn without intersecting each other.

For unbounded cells, the same argument works, with the exception that along the northbound part a pseudo-vertical enters the cell through the leftmost upper pseudo-line (i.e, at level $k$), and the southbound part leaves the cell through the rightmost pseudo-line at level~$(k+1)$.
\end{proof}

\begin{figure}
\centering
\includegraphics{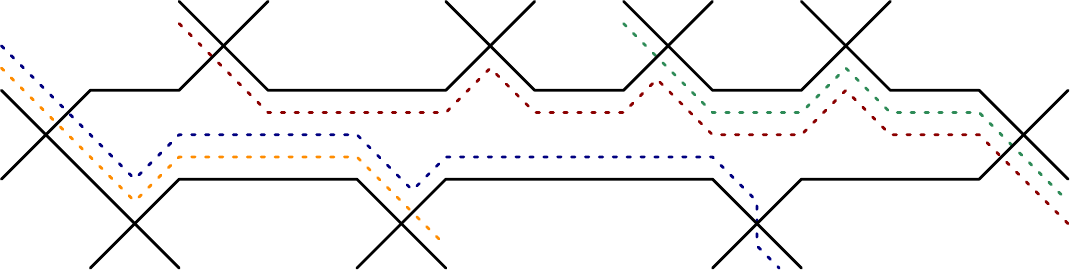}
\caption{Four pseudo-verticals (two northbound and two southbound parts) meeting in a common cell.}
\label{fig_non_crossing_vertical}
\end{figure}

\sloppypar{
An augmentation of $\A$ with a complete collection of non-intersecting pseudo-verticals defines a total order on the vertices in the arrangement (cf.~the notion of ``$P$-augmentation'' in~\cite{semispaces}).
Given an arrangement of lines, Edelsbrunner and Guibas~\cite{top_sweep,top_sweep_corrig} define a \emph{topological sweep} as a sweep of an arrangement of lines with a moving curve that intersects each line exactly once.
The topological sweep has been generalized to pseudo-line arrangements by Snoeyink and Hershberger~\cite{top_sweep_abstract}.
At any point in time during the sweep, the sweeping curve may pass over at least one crossing of the arrangement, maintaining the property that it intersects each line exactly once.
However, in contrast to a straight vertical line, there can be several crossings that may be passed next by the sweep curve.
It can be observed that we obtain the order of crossings determined by the pseudo-verticals by always sweeping over the lowest-possible crossing in a topological sweep.
In the wiring diagram shown in \fig{fig_gamma_whole_example}, the $x$-order of the crossings represents this order.
}

Lemma~\ref{lem_vertical_non_crossing} shows that we can add pseudo-verticals to an arrangement such that they only intersect at vertical infinity.
Hence, pseudo-verticals can be considered as a $P$-augmentation (cf.~\cite{semispaces}) of the initial arrangement~$\A$, and we can actually draw a wiring diagram where all the pseudo-verticals can also be represented by vertical lines.

\begin{lemma}\label{lem_order_pseudo_verticals}
The relative order of two pseudo-verticals can be obtained by a linear number of sidedness queries and queries of the form $a \prec b$.
\end{lemma}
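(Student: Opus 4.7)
The plan is to exploit Lemma~\ref{lem_vertical_non_crossing}: since $\GA{pq}$ and $\GA{rs}$ can be drawn without crossing, they meet every pseudo-line $a\in\A$ in the same relative left--right order. It therefore suffices to find a single witness pseudo-line on which I can locate both crossings. A natural choice is $p$: by construction, $\GA{pq}$ crosses $p$ precisely at $pq$, while $\GA{rs}$ crosses $p$ at some other point that, by Lemma~\ref{lem_vertical_pseudo_line}, is uniquely determined. Comparing these two points along $p$ then decides the global order of the two pseudo-verticals.

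First I would preprocess in $O(n)$ queries: for every pseudo-line $a\in\A$, a single sidedness query tells me whether $a$ lies above or below $pq$ and whether $a$ lies above or below $rs$; $O(n)$ queries of the form $a\prec b$ then reveal the $\prec$-relations I need and in particular the sets $L(pq)$ and $L(rs)$. Next, to locate $\GA{rs}\cap p$: the preprocessed data immediately say whether $p$ is above or below $rs$, hence whether this crossing lies on the northbound or southbound ray of $\GA{rs}$. Assume the northbound case (the other is symmetric); by Lemma~\ref{lem_upper_envelope} this ray traces the upper envelope of $L(rs)\cup\{r\}$. I walk along the envelope starting from $rs$, one pseudo-line at a time, using the preprocessed sidedness and $\prec$ information to identify the next envelope pseudo-line in $O(1)$ per step. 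The trace terminates either when I find the envelope segment that is crossed by $p$ from below — pinning $\GA{rs}\cap p$ to a specific crossing $pe$ with $e\in L(rs)\cup\{r\}$ — or when the envelope leaves its last segment and continues vertically, in which case Observation~\ref{obs_rank_above_envelope} determines the crossing directly from the $\prec$-order.

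Once the crossing of $\GA{rs}$ with $p$ is identified with (or placed next to) a specific crossing $pe$, comparing its $x$-position to $pq$ along $p$ reduces to a question inside the three-pseudo-line sub-arrangement $\{p,q,e\}$, which is settled by $O(1)$ additional sidedness and $\prec$ queries. The hard part will be keeping the envelope walk within the linear budget: the envelope can genuinely have $\Theta(n)$ breakpoints, so every step must be executed in $O(1)$ using only the precomputed data, with no further queries per step beyond a constant number. Degeneracies — when $\{p,q\}$ and $\{r,s\}$ share a pseudo-line, or when $p$ itself happens to lie on the envelope of $L(rs)\cup\{r\}$ and $\GA{rs}$ follows $p$ for a stretch — are handled by choosing an alternative witness pseudo-line (e.g.~$q$, or any element of $L(pq)$) and repeating the same argument, which does not change the asymptotic count. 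Summing the preprocessing, the envelope trace, and the final $O(1)$ comparison gives the desired $O(n)$ bound on the total number of sidedness and $\prec$ queries.
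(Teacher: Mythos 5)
Your high-level strategy is legitimate and genuinely different from the paper's: you invoke Lemma~\ref{lem_vertical_non_crossing} to reduce the comparison of $\GA{pq}$ and $\GA{rs}$ to comparing their two crossing points along a single witness pseudo-line $p$. The paper instead never traces anything; it does a constant-depth case analysis on which quadrant of $pq$ contains $rs$ (and vice versa), resolving each case either by a single $\prec$-comparison (e.g.\ $r\prec p$ forces $r\in L(pq)$ and hence $rs$ left of $\GA{pq}$ by Lemma~\ref{lem_upper_envelope}) or by one linear-cost test of the form ``does some $a\in L(pq)$ lie above $rs$?''. Your reduction to a witness line is a valid alternative skeleton, and the final $O(1)$ comparison inside $\{p,q,e\}$ is fine.

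The gap is in the envelope walk. You claim each step of tracing the upper envelope of $L(rs)\cup\{r\}$ costs $O(1)$ queries ``using the preprocessed sidedness and $\prec$ information,'' but the preprocessing you describe only records, per pseudo-line, its position relative to the two crossings and its $\prec$-rank. That data does not determine which pseudo-line crosses the current envelope edge from above first: identifying the successor breakpoint requires comparing the positions of the crossings $a_ia_j$ along $a_i$ over all remaining candidates $a_j$, which is $\Theta(n)$ queries per breakpoint and $\Theta(n^2)$ overall, exactly the naive blow-up the paper warns about in Section~\ref{sec_selecting_pseudo_line}. You flag this as ``the hard part'' but do not resolve it, so as written the proof does not establish the linear bound. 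The fix is that you do not need the walk at all: you only need the single point where $p$ pierces the upper envelope of $L(rs)\cup\{r\}$, and that is the extreme crossing of $p$ with the elements of $L(rs)\cup\{r\}$ in the order along $p$, computable by a linear scan with $O(1)$ sidedness queries per comparison (this is precisely the ``find the last $a\in L(pq)\cup\{p\}$ crossed by $u$'' primitive the paper uses later for pruning $U_B$). With that substitution, and with care in the cases where $p$ lies below $rs$ or never meets the envelope (your Observation~\ref{obs_rank_above_envelope} branch), your argument goes through; alternatively, the paper's quadrant case analysis avoids the envelope entirely.
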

\begin{proof}
For two pseudo-verticals~$\GA{pq}$ and~$\GA{rs}$ (we have $p \prec q$ and $r \prec s$), determining the relative order means we have to find out whether $r$ crosses $s$ before or after crossing $\GA{pq}$, i.e., whether $rs$ is to the left or to the right of~$\GA{pq}$.
The two pseudo-lines defining a crossing naturally partition the plane into four regions, which we call the upper, lower, left, and right \emph{quadrant} of the crossing.
If~$rs$ is in the left quadrant of~$pq$ (i.e., below~$p$ and above~$q$), then $rs$ is definitely to the left of~$\GA{pq}$.
Similarly, if $rs$ is in the right quadrant of $pq$ then~$rs$ is to the right of~$\GA{pq}$.
The analogous holds when exchanging the roles of $pq$ and $rs$.
Therefore, we can assume without loss of generality that $rs$ in the upper quadrant of~$pq$ (as the other case is symmetric), and that $pq$ is either in the upper or lower quadrant of~$rs$.
Consider first the case where~$pq$ is in the upper quadrant of~$rs$.
If $r \prec p$, then $r$ is part of $L(pq)$, and $rs$ is, by Lemma~\ref{lem_upper_envelope}, to the left of~$\GA{pq}$.
Analogously, if $p \prec r$, then $p$ is part of $L(rs)$, and therefore $rs$ is to the right of~$\GA{pq}$.
We are therefore left with the case where~$pq$ is in the lower quadrant of~$rs$.
If there exists a pseudo-line $a \in L(pq)$ that is above~$rs$, we again know by Lemma~\ref{lem_upper_envelope} that $rs$ is to the left of~$\GA{pq}$.
If no such pseudo-line exists, then $\rk_{pq}(r) < \rk_{pq}(s)$, and therefore, the crossing $rs$ is to the right of~$\GA{pq}$.
\end{proof}

\section{Linear-Time Pseudo-Line Selection}\label{sec_levels_algorithm}
We now discuss algorithmic properties of pseudo-verticals.
For the definition of pseudo-verticals and rank we assumed full knowledge about~$\A$.
The next task will be to make the notions accessible in the setting where we can only query the abstract order type through an oracle.
At the end we aim at using the oracle to select a pseudo-line of given rank w.r.t.\ a pseudo-vertical in $O(n)$ time.

\subsection{An Oracle for an Arrangement}\label{sec_arrangement_oracle}
Before answering queries on a pseudo-line arrangement, we need to have an internal representation of the arrangement (without explicitly building it).
Let $P$ be a predicate representing an abstract order type on a set $S$ as a counterclockwise oracle, i.e., if there is a primal point set for $S$ then $P(x,y,z)$ tells us whether the three points form a counterclockwise oriented triangle or not, in the general setting $P$ represents a chirotope.
From~\cite{extreme_journal} we borrow a linear-time procedure to determine an extreme point~$x$ of~$S$ using only queries to $P$.
We then use the following internal representation: For all $a \in S \setminus \{x\}$, we define that $x \prec a$.
For two points $a, a' \in S \setminus \{x\}$, we define that $a \prec a'$ if and only if $P(x,a,a')$, i.e., in the arrangement the crossing $ax$ precedes~$a'x$ on~$x$.
For two points $p,q \in S$ with $p \prec q$, the dual pseudo-line $r$ is below the crossing~$pq$ if and only if $P(p,q,r)$.
Hence, for three points $u,v,w \in S \setminus \{x\}$, the dual line $r$ is below the crossing defined by the (unordered) pair $(u,v)$ if and only if $P(u,v,w) = P(u,v,x)$, i.e., above/below queries for the arrangement of pseudo-lines corresponding to $P$ can be answered in constant time.
Note that a constant number of these queries also specify whether the crossing $ap$ precedes the crossing $bp$ on $p$.

Observe that, with this representation, the first unbounded cell we meet when traversing~$\GA{pq}$ against its direction is the north face, because every pseudo-line is crossed by the pseudo-line~$x$ from above (see again \fig{fig_gamma_whole_example}).
However, we will not make use of this fact in the remainder of this paper, in particular since we use the fact that the problem is symmetric when exchanging the role of the north face and the south face (which corresponds to rotating the arrangement by~$180^\circ$).

Our linear-time rank selection algorithm will depend on removing a linear fraction of the pseudo-lines in each iteration.
However, the procedure must not remove the extreme point~$x$, to keep the sub-arrangements consistent with the full arrangement.

\subsection{Selecting a Pseudo-Line}\label{sec_selecting_pseudo_line}

For a given $k$, we want to select the pseudo-line~$m$ of rank~$k$ along $\GA{pq}$.
For a subset $B$ of pseudo-lines and $m\in B$, we denote with $\rk_{pq}(m,B)$ the rank of $m$ within $B$ on $\GA{pq}$.

In the straight-line version, a linear-time selection algorithm can be used to find an element of rank~$k$ in $O(n)$ time.
This relies on the fact that the relative position of two lines can be computed in constant time.
Comparing $\rk_{pq}(s)$ and $\rk_{pq}(r)$ in the abstract setting can be reduced to deciding whether the crossing $rs$ is below some pseudo-line $a \in L(pq) \cup \{p\}$.
Doing this naively results in a linear number of queries and hence we get a selection algorithm with $\Omega(n^2)$ worst-case behavior.
We therefore need a more sophisticated method.

When discussing the relative position of two pseudo-verticals, we have seen that checking whether a crossing $rs$, $r \prec s$, is below~$\GA{pq}$ (in which case we have $\rk_{pq}(s) < \rk_{pq}(r)$), may require to determine whether $rs$ is below a pseudo-line $a \in L(pq)$, which, in the worst case, results in a linear number of comparisons.

Let $m$ be the (unknown) pseudo-line of rank $k$ within $B$.
We use a prune-and-search approach to identify~$m$.
By counting the elements of $B$ above $pq$, we determine whether $m$ is above or below $pq$ (using $O(n)$ queries).
Without loss of generality, assume $m$ is above $pq$ (the other case is symmetric) and let $U$ be the set of pseudo-lines above $pq$.
Since removing pseudo-lines from $U$ does not change the structure of the northbound part of~$\GA{pq}$, we can restrict attention to $U_B = U \cap B$.
We can also ignore (remove) pseudo-lines below $pq$ that are not in $L(pq)$, i.e., each pseudo-line~$l$ below $pq$ such that $p \prec l$.

As a next step, we can, in linear time, verify whether $m$ starts above all pseudo-lines in $L(pq) \cup \{p\}$.
If this is the case, the rank of $m$ is determined by the order in which the pseudo-lines start, and we can apply the standard selection algorithm using this order (recall Observation~\ref{obs_rank_above_envelope}).

We are therefore left with the case where $m$ starts below some element $a \in L(pq) \cup \{p\}$.
By Corollary~\ref{cor_envelope_order}, we know that we have to find the pseudo-line $e$ where $m$ crosses the upper envelope of $L(pq) \cup \{p\}$ (recall that we have $e \prec m$).

Basically, the algorithm continues as follows.
We alternatingly remove elements in $U_B$ and $L(pq)$ such that the pseudo-lines $e$ and $m$ remain in the respective set until we are left with only a constant number of pseudo-lines in the arrangement.
We describe these two pruning steps in two versions, first in a randomized version and after that in a deterministic version.
In particular the pruning of $U_B$ turns out to be much simpler in the randomized version.

\subsubsection{Randomized Pruning}
We first show how to remove pseudo-lines from~$U_B$:
Pick uniformly at random a pseudo-line $u \in U_B$.
In time proportional to the size of $L(pq)$ we find the last $a \in L(pq) \cup \{p\}$ crossed by $u$.
Since the crossing of $u$ with $\GA{pq}$ is immediately after the crossing with $a$ we can use the crossing $ua$ to split $U_B$ into elements of rank less than $\rk_{pq}(u,U_B)$ and elements of larger rank.
One of the two sets can be pruned.

Now we turn to removing pseudo-lines from~$L(pq)$:
One approach is to consider the $k$-level $\sigma$ in the sub-arrangement induced by $U_B$.
We observe that no element of $L(pq)$ can cross $\sigma$ from below before $\sigma$ crosses the upper envelope of $L(pq) \cup \{p\}$, as such a pseudo-line of $L(pq)$ would have to cross that element of $U_B$ again before $pq$.
All pseudo-lines in $L(pq)$ that start below $\sigma$ can therefore be pruned.
Among the remaining elements of $L(pq)$, the crossings with $\sigma$ define a total order.
From the remaining elements, pick, uniformly at random, a pseudo-line $b \in L(pq)$ and select (in $O(n)$ time) the pseudo-line $m' \in U_B$ where $b$ crosses $\sigma$.
We know that $m'$ is unique for the choice of~$b$.
We may prune all elements $b' \in L(pq)$ that are below $bm'$, as no element of $L(pq)$ can cross $\sigma$ more than once, and hence, no such $b'$ can be $e$ (the pseudo-line where $m$ leaves the upper envelope of $L(pq) \cup \{p\}$).
The total order on the remaining elements in $L(pq)$ implies that we can expect half of the elements to be pruned.

With the target of obtaining a deterministic version of our algorithm in mind, we describe the following alternative variant for pruning $L(pq)$.
Suppose we are given any crossing~$vw$, with $v,w \in L(pq)$ and $v \prec w$, on the upper envelope of $L(pq) \cup \{p\}$; see \fig{fig_prune_envelope}.
Let $\lv(vw)$ be the number of pseudo-lines of~$U_B$ above the crossing $vw$.
Depending on the value of $\lv(vw)$, we remove the pseudo-lines of $L(pq)$ that cannot be on the part of the upper envelope that contains the crossing with~$m$:
On $p$ consider the crossings $vp$ and $wp$.
Elements of $L(pq)$ that contribute to the upper envelope between $vw$ and $pq$ cross $p$ after $wp$.
Similarly, elements of $L(pq)$ that contribute to the upper envelope between the north face and $vw$ cross $p$ before $vp$.
Hence, depending on $\lv(vw)$, we can remove either the pseudo-lines in~$L(pq)$ that cross $p$ before~$wp$ or after~$vp$.
It remains to choose $vw$ to prune enough points.
The median~$t$ of the intersections of pseudo-lines from~$L(pq)$ with $p$ can be found with a linear number of queries (even deterministically).
Based on~$t$, we partition $L(pq)$ into the left part $L$ and the right part $R$.
Find the $\prec$-minimal element $r^\star$ of $R$ and remove all elements $l \in L$ with $r^\star \prec l$.
The removed elements do not contribute to the upper envelope of $L(pq) \cup \{p\}$.
If all elements of $L$ are removed by that, we are done.
Otherwise, we want to find the unique crossing $vw$ of a pseudo-line $v \in L$ and a pseudo-line $w \in R$ on the upper envelope of $L(pq)$.
Observe that, in the primal, $vw$ corresponds to an edge of the convex hull of $L(pq) \cup \{p\}$ that is stabbed by the supporting line of $pt$.
Finding $vw$ in linear time is described in~\cite{extreme_journal}.
For the sake of self-containment, we give a description of the randomized variant of that algorithm in terms of calls to our oracle.
We start by picking, uniformly at random, a pseudo-line $r\in R$ and then determine the last crossing $rl$ with a pseudo-line from $L$ on $r$.
It can be argued (cf.~Lemma~\ref{lem:prune_LR}) that every pseudo-line $r' \in R$ whose crossing with $l$ is behind the crossing $rl$ fails to be a candidate for $w$ and can hence be removed.
We expect to remove half of the pseudo-lines from $R$ through this.
A symmetric step can be used to reduce the size of $L$.
By always applying the reduction to the larger of the two we obtain a procedure that outputs the pair $vw$ with expected $O(|L(pq)|)$ queries.
Next we determine which elements of $U_B$ are above and which below $vw$.
From this we deduce whether the element $m\in U_B$ with $\rk_{pq}(m,U_B)=k$ intersects $\GA{pq}$ in the part where it follows the envelope of $L$ or where it follows the envelope of $R$.
Depending on this we can either prune $L$ or $R$ from $L(pq)$.
It remains to show that the expected number of queries is in~$O(n)$.
\begin{figure}[ht]
\centering
\includegraphics{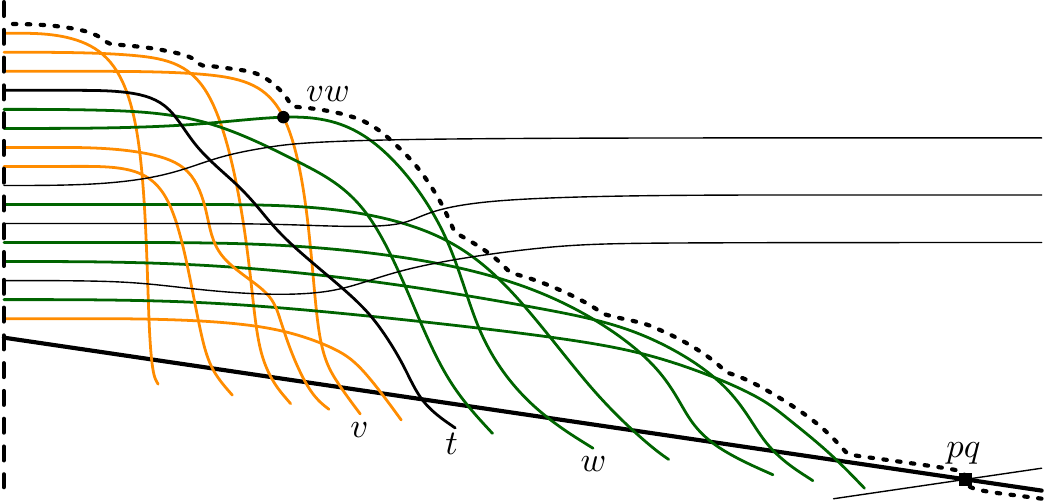}
\caption{Partitioning the pseudo-lines in $L(pq)$ along~$p$ by a pseudo-line~$t$.}
\label{fig_prune_envelope}
\end{figure}

We begin with the analysis of the expected number of oracle calls.
The justification of some of the claims that are made is in the lemmas below.
(They are closely related to arguments used in~\cite{extreme_journal} in the primal, and are stated here in terms of queries to our oracle for the sake of self-containment.)

\begin{theorem}
Given an arrangement~$\A$ of pseudo-lines, a subset~$B$ of its pseudo-lines, a crossing $pq$, and a natural number $k \leq |B|$, the pseudo-line~$m \in B$ with $\rk_{pq}(m,B) = k$, can be found with a randomized algorithm that uses an expected linear number of calls to the oracle representing $\A$.
\end{theorem}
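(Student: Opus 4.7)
The plan is to analyze the prune-and-search loop described in the preceding paragraphs and show it satisfies a recurrence $T(n) \le T(\alpha n) + O(n)$ for some constant $\alpha < 1$, where $n = |U_B| + |L(pq)|$ is the current size and $T(n)$ is the expected number of oracle calls. All ``setup'' parts of an iteration — counting which elements of $B$ lie above $pq$, checking whether $m$ still starts above every pseudo-line of $L(pq) \cup \{p\}$ (in which case we invoke Observation~\ref{obs_rank_above_envelope} and a standard deterministic selection), and computing the median $t$ of the crossings of $L(pq)$ with $p$ — clearly take $O(n)$ queries, so everything reduces to showing that one round of randomized pruning removes a constant fraction of the pseudo-lines in expectation.

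First I would analyze the $U_B$-pruning step. Given $u \in U_B$ picked uniformly at random, we walk backward along $u$ and locate its last crossing with a pseudo-line $a \in L(pq) \cup \{p\}$; this scans $L(pq)$ once and thus uses $O(n)$ oracle calls. The crossing $ua$ is, by Lemma~\ref{lem_upper_envelope}, the point on the upper envelope where $u$ joins $\GA{pq}$, so comparing $\rk_{pq}(u,U_B)$ to $k$ lets us discard either $\{u' \in U_B : \rk_{pq}(u',U_B) < \rk_{pq}(u,U_B)\}$ or its complement. Because $\rk_{pq}(u,U_B)$ is uniform on $\{1,\dots,|U_B|\}$, the pruned side has size at least $|U_B|/4$ with probability at least $1/2$, yielding a constant expected reduction of $|U_B|$.

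Next I would analyze the $L(pq)$-pruning step. After splitting $L(pq)$ into $L$ and $R$ of equal size at the median $t$, we pick $r \in R$ at random and find the last crossing $rl$ on $r$ with an element of $L$. I would formulate the small supporting claim (call it Lemma~\ref{lem:prune_LR} as suggested in the text) that every $r' \in R$ whose crossing with $l$ lies after $rl$ on $l$ fails to be the envelope-crossing $w$, and is symmetrically true on the $L$ side; this is exactly the geometric argument used in the convex-hull edge procedure of~\cite{extreme_journal}. The expected-halving argument — always applying the pruning step to the currently larger of $L$ and $R$ — then yields $vw$ after $O(|L(pq)|)$ oracle calls in expectation. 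With $vw$ in hand, we count (in $O(n)$ queries) how many elements of $U_B$ lie above $vw$; this reveals on which side of $vw$ the target level-$k$ crossing of $\GA{pq}$ with $U_B$ lies, and Lemma~\ref{lem_upper_envelope} guarantees that $m$ enters the envelope through a pseudo-line belonging to exactly one of the two halves $L,R$. Thus one entire half of $L(pq)$ can be discarded.

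Putting the pieces together, each iteration uses $O(n)$ expected oracle calls and strictly shrinks one of $|U_B|$, $|L(pq)|$ by a constant factor; after at most a constant number of iterations both have shrunk, so the recurrence solves to $T(n) = O(n)$ by the standard geometric-series argument. The main obstacle I anticipate is not in the arithmetic of the recurrence but in verifying that the pruning invariants are preserved: namely, that throughout all iterations the element $m \in U_B$ of rank $k$ is never removed, and that the relevant portion of the upper envelope of $L(pq) \cup \{p\}$ — the portion that determines where $m$ joins $\GA{pq}$ in the sense of Corollary~\ref{cor_envelope_order} — is still witnessed by the surviving pseudo-lines. This I would discharge by observing, for each of the two pruning rules, that a discarded pseudo-line is provably not equal to $m$ (for $U_B$-pruning) or provably not the envelope-edge $e$ sought in Corollary~\ref{cor_envelope_order} (for $L(pq)$-pruning), so the answer to the remaining subproblem on the pruned instance coincides with the answer to the original one.
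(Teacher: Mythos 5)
Your proposal is correct and follows essentially the same route as the paper's proof: a quickselect-style randomized prune of $U_B$ (random pivot $u$, locate its envelope crossing in $O(|L(pq)|)$ queries, expected quarter reduction) combined with the median split of $L(pq)$ into $L$ and $R$, the tournament-style expected-halving search for the bridge crossing $vw$, and a final $O(|U_B|)$ count to discard one half. The only cosmetic difference is the cost accounting: you run a recurrence on $n=|U_B|+|L(pq)|$ and rely on the alternating schedule so that both sets shrink within a bounded number of rounds, whereas the paper charges each pruning step to the larger of the two sets ($U_B$ is pruned only when $|L(pq)|\leq|U_B|$ and vice versa); both yield the same geometric series, and your closing discussion of the invariants ($m$ and the envelope edge $e$ survive every prune) matches the paper's remark on the modified pseudo-vertical $\GA{pq}'$.
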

\begin{proof}
Deciding whether $m$ is above or below $pq$ and computing the initial sets $U_B$ and $L(pq)$ can be done deterministically with a linear number of queries.

A pruning step for $U_B$ requires $O(|U_B|+|L(pq)|)$ queries.
We prune $U_B$ only if $|L(pq)| \leq |U_B|$ and in expectation we prune at least one quarter of the elements of $U_B$.
Therefore, the expected number of queries to select $m$ from $B$ is linear in $|B|$.

A pruning step for $L(pq)$ starts with the median computation on $p$.
This requires $O(|L(pq)|)$ queries in expectation.
After that we have the sets $L$ and~$R$ and again with a linear number of queries we make sure that $l \prec r$ for all $l \in L$ and $r \in R$.
Iteratively prune the larger of $L$ and $R$ with $O(|L|+|R|)$ queries.
Upon pruning the expected size of the set halves (Lemma~\ref{lem:exp-pru-size}).
Hence, for pruning steps on $L$ we expect to use $O(|L|)$ queries and symmetrically for $R$.
Finally, we need $O(|U_B|)$ to decide which of $L$ and $R$ can be discarded.
Since $|U_B|\leq |L(pq)|$ we conclude that halving the size of $L(pq)$ can be done with $O(|L(pq)|)$ queries.
Thence the total number of queries used in the pruning of $L(pq)$ is expected to be in $O(|L(pq)|)$.
\end{proof}

\begin{lemma}\label{lem:prune_LR}
Let $l$ be the last pseudo-line crossing $r$ and let $r'$ be a pseudo-line crossing $l$ after the crossing $rl$, then $r' \neq w$.
\end{lemma}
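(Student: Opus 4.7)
My plan is to argue by contradiction. Suppose $r' = w$; then $vw = vr'$ with $v \in L$ and $v \prec r'$ is a crossing on the upper envelope of $L(pq) \cup \{p\}$. I will derive a contradiction by comparing the heights of $r$ and $r'$ in a right-neighborhood of $vw$.

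The key input is that $l$ is the \emph{last} pseudo-line from $L$ that crosses $r$. Since $v \in L$, the crossing $vr$ must lie at or to the left of $rl$ along $r$; combined with $v \prec r$ (which holds because we already ensured $L \prec R$ in the previous step of the pruning procedure), this yields that $v$ is strictly below $r$ at every $x$-coordinate strictly greater than that of $rl$.

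Next I locate $vw$ horizontally. Immediately to the right of $vw$ the envelope coincides with $r'$, so $r'$ is at least as high as $l$ in that neighborhood; since $l \prec r'$ and they cross only at $r'l$, this forces the $x$-coordinate of $vw$ to be at least that of $r'l$. By hypothesis $r'l$ lies to the right of $rl$ on $l$, so the $x$-coordinate of $vw$ is strictly greater than that of $rl$. Evaluating at $x(vw)$: by the crossing identity $r'$ and $v$ have the same height, while by the previous paragraph $v$ is strictly below $r$; hence $r'$ is strictly below $r$ at $x(vw)$.

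On the other hand, because the envelope equals $r'$ immediately to the right of $vw$ and $r \in L(pq)$ belongs to the sub-arrangement, the envelope (and hence $r'$) is at least as high as $r$ in that right-neighborhood; by continuity this inequality extends to $x(vw)$ itself, contradicting the strict inequality obtained above. Therefore $r' \neq w$. The only subtlety I anticipate is the edge case $v = l$, in which $vw$ and $r'l$ coincide so that $x(vw) = x(r'l)$; the argument still goes through because the only strict inequality actually used is $x(vw) > x(rl)$, which follows from $x(rl) < x(r'l)$ regardless of whether $x(vw) = x(r'l)$ or $x(vw) > x(r'l)$.
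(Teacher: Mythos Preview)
Your argument is correct. Both your proof and the paper's hinge on the same two envelope facts---that $r$ lies strictly below the vertex $vr'$ and that $l$ lies strictly below it---but they are assembled differently.

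The paper (writing $a$ for your $v$) argues that the point $lr$ is above $r'$ while the point $ar$ is below $r'$, and from this infers that on $r$ the crossing with $l$ precedes the crossing with $a$, directly contradicting the choice of $l$ as the \emph{last} $L$-line met along~$r$. You instead treat ``$l$ is last'' as an input: from $v\in L$ you get $x(vr)\le x(lr)$, and combined with $v\prec r$ this places $v$ strictly below $r$ for all $x>x(lr)$. You then locate the envelope vertex $vw=vr'$ to the right of $rl$ (via $x(vr')\ge x(lr')>x(lr)$), and obtain the contradiction at that single abscissa: $r'=v$ is strictly below $r$ there, while the envelope property forces $r'\ge r$.

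What each buys: the paper's phrasing ties the contradiction back to the defining property of $l$, so one sees immediately why the pruning rule is sound. Your version is more self-contained; in particular, the paper's passage ``$lr$ above $r'$ and $ar$ below $r'$ $\Rightarrow$ $lr$ precedes $ar$ on $r$'' tacitly relies on the envelope (one really needs $x(ar')\ge x(lr')$, which holds because $l$ is below the envelope at $ar'$), and your organization makes that use explicit. Your handling of the edge case $v=l$ is also fine.
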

\begin{proof}
Suppose there is a pseudo-line $a \in L$ such that $ar'$ is the crossing on the upper envelope.
Then it follows that the crossing $l r$ is above $r'$ and because $a \prec r$ the crossing $ar$ is below $r'$.
This, however implies that on $r$ the crossing with $l$ precedes the crossing with $a$.
This is a contradiction to the choice of $l$.
\end{proof}

The dual pruning of $L$ is as follows:
Pick uniformly at random a pseudo-line $l\in L$ and determine the first crossing $rl$ with a pseudo-line from $R$ on $l$.
From Lemma~\ref{lem:prune_LR-2} we obtain that every pseudo-line $l'\in L$ whose crossing with $r$ precedes the crossing $rl$ fails to be a candidate for $v$ and can hence be removed.

\begin{lemma}\label{lem:prune_LR-2}
Let $r$ be the first pseudo-line crossing $l$ and let $l'$ be a pseudo-line crossing $r$ before the crossing $l r$, then $l' \neq v$.
\end{lemma}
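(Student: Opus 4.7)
My plan is to argue by contradiction, dualizing the argument used for Lemma~\ref{lem:prune_LR} (swapping the roles of $L$ and $R$, and of ``first/before'' with ``last/after''). Suppose $l' = v$, so that there exists some $r'' \in R$ with the crossing $l' r'' = vw$ on the upper envelope of $L(pq) \cup \{p\}$. The plan is to translate this envelope assumption into positional information about $l$ and then to compare the $x$-coordinates of the four crossings $l'r$, $l'r''$, $lr$, and $lr''$.

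The proof begins by establishing two positional facts. First, since $l' r''$ lies on the upper envelope of $L(pq) \cup \{p\}$ and $l \in L(pq)$, the pseudo-line $l$ passes strictly below the point $l' r''$ at $x(l' r'')$. Second, the hypothesis that $l'r$ precedes $lr$ on $r$, together with $l \prec r$ (which holds after the cleanup step of the algorithm), forces $l$ to be still above $r$ at $x(l'r)$, so the crossing $l' r$ lies strictly below $l$. These two facts together already imply $r \neq r''$, for otherwise the single point $l'r = l'r''$ would be simultaneously above and below $l$.

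Next, I would split into two cases according to the order of $x(l'r)$ and $x(l'r'')$. If $x(l'r) < x(l'r'')$, I invoke Lemma~\ref{lem_upper_envelope}: the upper envelope coincides with $l'$ on an interval ending at $x(l' r'')$; on the non-empty portion of this interval to the right of $x(l'r)$, the relation $l' \prec r$ together with their crossing at $x(l'r)$ forces $r > l'$, placing $r$ above the envelope, which contradicts $r \in L(pq)$. If instead $x(l' r'') < x(l' r)$, the two positional facts combined with $l \prec r''$ and $l \prec r$ respectively yield $x(lr'') < x(l'r'')$ and $x(l'r) < x(lr)$; chaining these through the case assumption gives $x(lr'') < x(lr)$, so on $l$ the crossing with $r''$ precedes the crossing with $r$, contradicting that $r$ is the first pseudo-line of $R$ crossing $l$.

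The main delicate point is the first case, where one must verify that the envelope still coincides with $l'$ just to the right of $x(l'r)$ and has not already transitioned to another $L$-pseudo-line. This is confirmed by tracing the envelope: going leftward from $l'r''$, it stays on $l'$ until its next encounter with an $L$-pseudo-line from above, so the open sub-interval $(\max(x(l'r), x_T), x(l'r''))$, where $x_T$ is the $x$-coordinate of that next transition, lies entirely in the $l'$-portion of the envelope and is non-empty precisely by the case assumption $x(l'r) < x(l'r'')$.
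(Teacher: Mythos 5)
Your proof is correct and follows the paper's approach: it rests on the same two positional facts (the crossing $l'r''$ lies above $l$ because it is on the upper envelope of $L(pq)\cup\{p\}$, and the crossing $l'r$ lies below $l$ because $l \prec r$ and $l'r$ precedes $lr$ on $r$) and reaches the same final contradiction with the choice of $r$ as the first element of $R$ crossing $l$. If anything, your explicit case distinction is more careful than the paper's one-line deduction, which silently passes over the configuration $x(l'r) < x(l'r'')$; your upper-envelope argument (via Lemma~\ref{lem_upper_envelope}) showing that this configuration forces $r$ above the envelope, and hence cannot occur, is a correct and welcome completion of that step.
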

\begin{proof}
Suppose there is a pseudo-line $b \in R$ such that $l'b$ is the crossing on the upper envelope.
Then it follows that the crossing $l'b$ is above $l$ and because $l \prec r$ the crossing $l'r$ is below $l$.
This, however implies that on $l$ the crossing with $b$ precedes the crossing with $r$.
This is a contradiction to the choice of $r$.
\end{proof}

\begin{lemma}\label{lem:complete}
For every pair $(l,l')$ of pseudo-lines in $L$ we have: if $l'$ is not removed when choosing $l$ as the (random) pseudo-line for pruning, then $l$ is removed when $l'$ is chosen as the (random) pseudo-line for pruning.
\end{lemma}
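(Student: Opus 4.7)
I would prove the lemma by contradiction, restricting attention to the sub-arrangement spanned by the four pseudo-lines $l, l', r, r'$, where $r$ and $r'$ denote the first $R$-line crossing $l$ and $l'$, respectively. Write $x_{ab}$ for the $x$-coordinate of the crossing of pseudo-lines $a$ and $b$.

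The first step is to translate the ``not removed'' hypothesis and the desired conclusion into inequalities. That $r$ is the first $R$-crossing on $l$ means $x_{lr} \le x_{lr''}$ for every $r'' \in R$, and symmetrically for $r'$ on $l'$. The assumption that $l'$ is not removed when $l$ is picked reads $x_{lr} < x_{l'r}$ (on $r$), and what must be shown is $x_{lr'} < x_{l'r'}$ (on $r'$). If $r = r'$ this is immediate, so I focus on $r \ne r'$, in which case the previous $\le$'s become strict: $x_{lr} < x_{lr'}$ and $x_{l'r'} < x_{l'r}$.

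Now assume for contradiction that $x_{lr'} \ge x_{l'r'}$. The plan is to analyse the two pseudo-triangles $\{l,l',r\}$ and $\{l,l',r'\}$. For three pseudo-lines with a prescribed top-to-bottom order at $x=-\infty$, a short wiring-diagram enumeration (consecutive crossings must involve currently adjacent lines) shows that the three pairwise crossings admit exactly two valid $x$-orderings. Since $l, l' \in L$ and $r, r' \in R$ we have $l, l' \prec r, r'$, leaving only the relative order of $l$ and $l'$ as a case distinction. In the case $l \prec l'$, the inequality $x_{lr} < x_{l'r}$ forces the ordering $x_{ll'} < x_{lr} < x_{l'r}$ on the first triangle, while $x_{l'r'} \le x_{lr'}$ forces $x_{l'r'} < x_{lr'} < x_{ll'}$ on the second; chaining yields $x_{lr'} < x_{lr}$, contradicting $x_{lr} < x_{lr'}$. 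The case $l' \prec l$ is symmetric and produces $x_{l'r} < x_{l'r'}$, contradicting $x_{l'r'} < x_{l'r}$. Either way we reach a contradiction, and the lemma follows.

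The only real obstacle in the argument is bookkeeping: correctly listing the two admissible orderings of each pseudo-triangle and matching the relevant inequality to the correct one. No geometric idea beyond that already used in Lemma~\ref{lem:prune_LR-2} is required; everything else reduces to a chain of elementary inequalities.
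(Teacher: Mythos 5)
Your proof is correct, and it reaches the conclusion by a genuinely different route than the paper. The paper's proof splits on whether the same $r\in R$ is the first $R$-line on both $l$ and $l'$ (where the claim is immediate) and otherwise argues directly via the position of the crossing $rr'$ relative to the crossings of $l$ and $l'$ with $r$ and $r'$; it is very terse and leaves the reader to fill in the allowable-sequence reasoning. You instead argue by contradiction, never touch the crossing $rr'$, and work entirely inside the two three-element sub-arrangements $\{l,l',r\}$ and $\{l,l',r'\}$, using the fact that a pseudo-triangle with a fixed order at $x=-\infty$ admits exactly two crossing orders (with the crossing of the extreme pair always in the middle), and then chain the resulting inequalities through $x_{ll'}$. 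Your version buys completeness and verifiability: the precondition $l,l'\prec r,r'$ (which the algorithm establishes before pruning and which both arguments silently rely on) is made explicit, the case split $l\prec l'$ versus $l'\prec l$ is handled, and every inequality is traceable to one of the two admissible orderings; the paper's version buys brevity and stays closer in spirit to the companion Lemmas~\ref{lem:prune_LR} and~\ref{lem:prune_LR-2}. Both arguments are elementary and of the same character, so I would classify yours as a more careful reworking rather than a new idea, but the decomposition into triangles through $ll'$ rather than through $rr'$ is a real difference.
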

\begin{proof}
If the same pseudo-line $r \in R$ is the first to cross $l$ and $l'$, then the statement is obvious.
If $r$ is the first to cross $l$ and $r'$ is the first on $l'$, then there must be a crossing of $rr'$ between $l r$ and $l r'$ on $r$.
Hence on $r'$ the crossing with $l$ precedes the crossing with $l'$.
\end{proof}

\begin{lemma}\label{lem:exp-pru-size}
The expected size of the set obtained with a pruning step from $L$ is at most $|L|/2$.
\end{lemma}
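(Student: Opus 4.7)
The plan is a short double-counting argument that uses Lemma~\ref{lem:complete} as a black box. For each choice of pivot $l\in L$, let $r(l)\in R$ denote the first pseudo-line from $R$ that crosses $l$ and let $S_l\subseteq L$ be the set of pseudo-lines that \emph{survive} the pruning step when $l$ is chosen, i.e., those $l'\in L$ for which the crossing of $l'$ with $r(l)$ does not precede $l\,r(l)$ on $r(l)$. Since the pivot is drawn uniformly, the quantity to bound is
\[
\mathbb{E}\bigl[|S_l|\bigr] \;=\; \frac{1}{|L|}\sum_{l\in L}|S_l|.
\]

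First, I would rewrite the sum as a count of ordered pairs,
\[
\sum_{l\in L} |S_l| \;=\; \bigl|\{(l,l')\in L\times L : l'\in S_l\}\bigr|,
\]
and split this count according to whether $l=l'$ or $l\neq l'$. The diagonal contributes exactly $|L|$ (every pivot trivially survives its own pruning step). For the off-diagonal part, Lemma~\ref{lem:complete} is exactly the tool needed: it says that for any distinct $l,l'\in L$ the two events $l'\in S_l$ and $l\in S_{l'}$ cannot both occur. Hence, grouping the off-diagonal ordered pairs into unordered pairs, at most one direction of each pair contributes, so the number of off-diagonal contributing pairs is at most $\binom{|L|}{2}$. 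Combining,
\[
\sum_{l\in L}|S_l| \;\leq\; \binom{|L|}{2}+|L| \;=\; \frac{|L|(|L|+1)}{2}.
\]

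Dividing by $|L|$ yields $\mathbb{E}[|S_l|]\leq (|L|+1)/2$, which matches the claimed $|L|/2$ once one observes that the pivot $l$ itself need not be retained in the recursion (it has already served its role of producing the candidate edge $l\,r(l)$), shaving off the additive constant. The argument is symmetric to the pruning of $R$ treated in Lemma~\ref{lem:prune_LR}.

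I do not anticipate a real obstacle: Lemma~\ref{lem:complete} is tailor-made for this counting step, and the remainder is a one-line averaging argument. The only cosmetic point is the reconciliation between the bound $(|L|+1)/2$ that falls out of the counting and the stated $|L|/2$, which is handled by excluding the pivot from the set that is passed to the next iteration.
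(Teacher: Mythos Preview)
Your argument is correct and is essentially the paper's proof in different clothing. The paper defines a directed graph $G_L$ on $L$ with an arc $l\to l'$ whenever $l'$ is removed under pivot $l$, uses Lemma~\ref{lem:complete} to conclude that $G_L$ is a tournament, and then reads off the expected out-degree; your double-count of the survivor pairs $(l,l')$ with $l'\in S_l$ is precisely the complementary formulation of that same computation, yielding the same value $(|L|+1)/2$ and the same handling of the additive constant via the pivot.
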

\begin{proof}
On the set $L$ we define a directed graph $G_L$ with edges $l \to l'$ if $l'$ is removed when choosing $l$ as the (random) pseudo-line for pruning.
Lemma~\ref{lem:complete} implies that $G_L$ is a tournament.
The expected number of pseudo-lines removed in the pruning equals the expected out-degree plus one.
The precise value is $(|L|+1)/2$ so in expectation less than $|L|/2$ pseudo-lines remain.
\end{proof}

The analogous statements of Lemma~\ref{lem:complete} and Lemma~\ref{lem:exp-pru-size} for the set~$R$ are true as well.

\paragraph*{Remark.} Note that, by removing pseudo-lines from $L(pq)$, we obtain a new arrangement~$\A'$, in which the pseudo-vertical~$\GA{pq}$ will, in general, follow different pseudo-lines from $L(pq)$ along its northbound ray.
Still, the number of pseudo-lines above the crossing $em$ that we look for remains the same, and $m$ will have the same rank with respect to the new pseudo-vertical $\GA{pq}'$ in~$\A'$.

\subsubsection{Deterministic Pruning}
To remove pseudo-lines from~$L(pq)$ deterministically, we can directly apply the deterministic algorithm given in~\cite{extreme_journal} to find~$vw$ after~$O(n)$ queries.

Recall that to remove elements of~$U_B$, we pick a pseudo-line~$u \in U_B$.
We compute the rank $\rk_{pq}(u,B)$ by finding the corresponding pseudo-line~$b \in L(pq) \cup \{p\}$ at which $u$ passes through the upper envelope of $L(pq) \cup \{p\}$.
Clearly, this can be done in linear time using our basic operations.
If $u = m$, we are done.
If $\rk_{pq}(u,B) < k$, then all pseudo-lines in $U_B$ below $bu$ can be removed (we will see how to choose~$u$ to remove a constant fraction of the pseudo-lines).
Otherwise, we remove all pseudo-lines in $U_B$ above $bu$ and update~$k$ accordingly.
While by this operation we obtain a new arrangement~$\A'$, the northbound ray of~$\GA{pq}$ in~$\A'$ is defined by the same pseudo-lines as in~$\A$, and we can therefore safely continue with the next iteration.
It remains to show how to pick $u$ in a deterministic way such that at least a constant fraction of~$U_B$ can be removed.
To this end, we use the concept of $\varepsilon$-approximation of range spaces.

Our definitions follow~\cite{new_trends_matousek}.
A \emph{range space} is a pair $\Sigma = (X, \mathcal{R})$ where $X$ is a set and $\mathcal{R}$ is a set of subsets of~$X$.
The elements of $\mathcal{R}$ are called \emph{ranges}.
For $X$ being finite, a subset $A \subseteq X$ is an \emph{$\varepsilon$-approximation} for $\Sigma$ if, for every range $R \in \mathcal{R}$, we have
\[
\abs{\frac{\abs{A \cap R}}{\abs{A}} - \frac{\abs{X \cap R}}{\abs{X}}} \leq \varepsilon \enspace .
\]

A subset $Y$ of $X$ is \emph{shattered} by $\mathcal{R}$ if every possible subset of~$Y$ is a range of~$Y$.
The \emph{Vapnik-Chervonenkis dimension} (VC-dimension) of $\Sigma$ is the maximum size of a shattered subset of $X$.
For sets with finite VC-dimension, Vapnik and Chervonenkis~\cite{vapnik_chervonenkis} give the following seminal result.

\begin{theorem}[Vapnik, Chervonenkis~\cite{vapnik_chervonenkis}]\label{finite_vc}
Any range space of VC-di\-men\-sion $d$ admits an $\varepsilon$-approximation of size $O(d/\varepsilon^2 \log(d/\varepsilon))$.
\end{theorem}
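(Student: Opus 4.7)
The plan is to establish this by the probabilistic method, taking a uniform random sample $A \subseteq X$ (with repetitions, for convenience) of size $m = c\,(d/\varepsilon^2)\log(d/\varepsilon)$ for a sufficiently large constant $c$, and arguing that with positive probability $A$ is an $\varepsilon$-approximation. For any single fixed range $R$, the empirical frequency $|A \cap R|/|A|$ is an average of $m$ i.i.d.\ Bernoulli variables with mean $|X \cap R|/|X|$, so Chernoff--Hoeffding bounds the chance of an $\varepsilon$-deviation by $\exp(-\Omega(\varepsilon^2 m))$. The obstacle is that $\mathcal{R}$ can contain exponentially many ranges, so a naive union bound over all of $\mathcal{R}$ is hopeless.

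The key combinatorial input is the Sauer--Shelah lemma: in any range space of VC-dimension $d$, the number of distinct traces $\{R \cap S : R \in \mathcal{R}\}$ on a finite set $S$ of size $k$ is at most $O(k^d)$. This bound cannot be applied directly to traces on $A$, however, because whether $A$ is an $\varepsilon$-approximation depends on $|X \cap R|/|X|$ and not only on $R \cap A$. To resolve this I would use the classical double-sampling (symmetrization) trick: draw a second independent sample $A'$ of size $m$ and consider the ``ghost'' event that some range $R$ satisfies $\bigl||A \cap R|/|A| - |A' \cap R|/|A'|\bigr| > \varepsilon/2$. A short concentration argument shows that, conditional on $A$ failing to be an $\varepsilon$-approximation of $X$, the sample $A'$ witnesses the failure through the ghost event with probability at least $1/2$, so $\Pr[A \text{ bad}] \le 2\,\Pr[\text{ghost}]$.

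The ghost event is measurable with respect to the traces of ranges on $A \cup A'$, a set of size at most $2m$, and by Sauer--Shelah there are only $O((2m)^d)$ such traces. For each fixed trace one applies Chernoff--Hoeffding to the random partition of $A \cup A'$ into the halves $A$ and $A'$, obtaining deviation probability at most $\exp(-\Omega(\varepsilon^2 m))$. Union-bounding yields a total failure probability of $O(m^d)\exp(-\Omega(\varepsilon^2 m))$, which is strictly less than $1$ for the chosen $m$. Hence a deterministic $\varepsilon$-approximation of the claimed size exists.

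The main obstacle is the symmetrization step: one must verify carefully that, conditioned on $A$ being bad for some specific $R$, the second sample $A'$ has its empirical frequency for $R$ close enough to $|X \cap R|/|X|$ that $A$ and $A'$ differ by more than $\varepsilon/2$ with constant probability, which needs a secondary Chernoff bound and the assumption $\varepsilon^2 m \gtrsim 1$. Once that reduction is in place, the remaining work is just combining Sauer--Shelah with Chernoff--Hoeffding and solving $m^d\exp(-\Omega(\varepsilon^2 m)) < 1$ to produce the stated bound.
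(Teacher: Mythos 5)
The paper does not prove this statement; it is quoted as a classical result with a citation to Vapnik and Chervonenkis, and your argument is precisely the standard proof from that literature: Chernoff for a single range, symmetrization via a ghost sample to replace the union bound over $\mathcal{R}$ by a union bound over the $O((2m)^d)$ traces on $A \cup A'$ (Sauer--Shelah), and a concentration bound for the random split. The outline is correct, including the two points you flag as needing care (the secondary concentration bound in the symmetrization step, valid since $\varepsilon^2 m \gg 1$, and the final inequality $m^d e^{-\Omega(\varepsilon^2 m)} < 1$, which holds for $m = c(d/\varepsilon^2)\log(d/\varepsilon)$ with $c$ large because $d\log m = O(d\log(d/\varepsilon))$). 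The only cosmetic gap is that you sample with repetition while the definition asks for a subset $A \subseteq X$; this is handled by the standard device of sampling without replacement (Hoeffding's bound still applies) or by de-duplicating the multiset, and does not affect the result.
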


For $|X| = n$, the \emph{shatter function} $\pi_\mathcal{R}(n)$ of a range space $(X, \mathcal{R})$ is defined by
\[
\pi_\mathcal{R}(n) = \max\{\abs{\{ Y \cap R : R \in \mathcal{R} \}} : Y \subseteq X\} \enspace .
\]
Vapnik and Chervonenkis~\cite{vapnik_chervonenkis} show that, for a range space $(X, \mathcal{R})$ of VC-dimension $d$, $\pi_\mathcal{R}(n) \in O(n^d)$ holds.
Matou\v{s}ek~\cite{matousek_approximations_conf,matousek_approximations_journal} gives, for a constant~$\varepsilon$, a linear-time algorithm for computing a constant-size $\varepsilon$-approximation for range spaces of finite VC-dimension $d$ (simplified by Chazelle and Matou\v{s}ek~\cite{chazelle_matousek}), provided there exists a subspace oracle.

\begin{definition}\label{def_subspace_oracle}
A \emph{subspace oracle} for a range space $(X, \mathcal{R})$ is an algorithm that returns, for a given subset $Y \subseteq X$, the set of all distinct intersections of $Y$ with the ranges in $\mathcal{R}$, i.e., the set $\{Y \cap R : R \in \mathcal{R}\}$ and runs in time $O(|Y|\cdot h)$, where $h$ is the number of sets returned.
\end{definition}

\begin{theorem}[{Matou\v{s}ek~\cite[Theorem~4.1]{matousek_approximations_conf}}]\label{thm_vc_linear_time}
Let $\Sigma = (X,\mathcal{R})$ be a range space with the shatter function $\pi_\mathcal{R}(n) \in O(n^d)$, for a constant $d \geq 1$.
Given a subspace oracle for $\Sigma$ and a parameter $r \geq 2$, a $(1/r)$-approximation for $\Sigma$ of size $O(r^2 \log r)$ can be computed in time $O(|X|(r^2 \log r)^d)$.
\end{theorem}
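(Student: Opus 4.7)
The plan is to combine three ingredients well-known in the $\varepsilon$-approximation literature: a deterministic halving lemma, a merge-and-reduce tree that keeps the working sets small, and the shatter-function bound that tames the derandomization cost. The subspace oracle plays the role of making the intrinsic data needed for derandomization accessible within the claimed running time.

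First, I would establish a deterministic halving subroutine: given an even-sized finite set $A$, produce a subset $A' \subseteq A$ of size $|A|/2$ that is a $\delta$-approximation for $(A, \mathcal{R}|_A)$ with $\delta = O(\sqrt{(\log |A|)/|A|})$, using time $O(|A|^{d+1})$. Uniform random sampling achieves this in expectation by a Chernoff--Hoeffding tail bound applied to every range. To derandomize, I would use the method of conditional probabilities with a pessimistic estimator that sums the Chernoff moment-generating upper bounds over the at most $\pi_\mathcal{R}(|A|) = O(|A|^d)$ distinct ranges enumerated by the subspace oracle; sweeping the elements of $A$ and choosing at each step whichever decision does not increase the estimator yields the desired $A'$ deterministically.

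Second, I would invoke the halving subroutine inside a merge-and-reduce tree. Partition $X$ into $\lceil |X|/s \rceil$ blocks of size $s = \Theta(r^2 \log r)$. Within each block, apply halving repeatedly to shrink the block to an approximation of a canonical size, then pair up adjacent approximations, take their union, and halve back. The size is controlled at every level, and approximation errors accumulate additively along the $O(\log(|X|/s))$ levels of the tree, but since the per-level error $\delta$ decays like $\sqrt{(\log m)/m}$ in the working size $m$, the total error is geometric in the level and dominated by the bottom layer, which is $O(1/r)$ by the choice of $s$. For the running time, the bottom-level work is $O(|X|/s) \cdot O(s^{d+1}) = O(|X|\, s^d)$; at higher levels the number of halvings halves while the per-operation cost stays bounded, so the total telescopes to $O(|X|\, s^d) = O(|X|(r^2 \log r)^d)$.

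The hard part, and the step I would expect to be most delicate, is the derandomized halving itself: one needs a pessimistic estimator that is simultaneously (i) an upper bound on the probability that the random halving fails on \emph{some} range, (ii) efficiently updatable as elements are decided one by one, and (iii) initially small enough to guarantee the existence of a good deterministic choice at every step. The shatter-function assumption $\pi_\mathcal{R}(n) = O(n^d)$ is precisely what keeps the union bound over ranges affordable, and the subspace oracle is what makes the estimator computable within the stated budget; balancing these against the geometric error accumulation in the merge-and-reduce tree is the real content of the argument.
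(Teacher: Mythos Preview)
The paper does not supply its own proof of this theorem: it is quoted verbatim as a result of Matou\v{s}ek~\cite[Theorem~4.1]{matousek_approximations_conf} (with the simplification by Chazelle and Matou\v{s}ek~\cite{chazelle_matousek} mentioned alongside) and used as a black box. So there is nothing in the paper to compare your argument against.

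That said, your outline is the standard proof strategy underlying Matou\v{s}ek's theorem and the Chazelle--Matou\v{s}ek simplification: a deterministic halving lemma obtained by derandomizing a random coloring via the method of conditional probabilities (with a pessimistic estimator summed over the $O(|A|^d)$ ranges listed by the subspace oracle), plugged into a merge-and-reduce tree whose working sets are kept at size $s=\Theta(r^2\log r)$. Your identification of the three ingredients and of the subspace oracle's role is accurate, and the running-time bookkeeping $O(|X|/s)\cdot O(s^{d+1})=O(|X|\,s^d)$ is the right calculation. One small point worth tightening in a full write-up: you process elements ``one by one'' in the derandomization, but the standard presentations pair the elements and decide one pair at a time (so that exactly half are kept); this is what guarantees $|A'|=|A|/2$ rather than merely $\approx |A|/2$, and it also keeps the estimator update clean. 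With that caveat, your proposal matches the proof that the cited references give.
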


Observe that, for such a range space, the running time of the subspace oracle is bounded by $O(|Y|^{d+1})$, as $h$ is at most $\pi_\mathcal{R}(|Y|)$.

Suppose that, e.g., $X$ is a point set in the Euclidean plane and $\mathcal{R}$ consists of all possible subsets of $X$ defined by half-planes, defining a range space $\Sigma = (X, \mathcal{R})$.
Hence, $\mathcal{R}$ is the set of \emph{semispaces} defined by the order type of~$X$.
The VC-dimension of $(X,\mathcal{R})$ is known to be~$3$~\cite{new_trends_matousek}. %
Hence, a constant-size $\varepsilon$-approximation of a point set for~$\mathcal{R}$ exists.
(As pointed out in~\cite{lo_matousek_steiger}, this approximation allows constructing an approximate ham-sandwich cut in constant time, such that on every side of the cut there are no more than $1/2 + \varepsilon$ of the points of each class.)
The subspace oracle returns, for any subset~$Y$ of points, all possible ways a line can separate $Y$, which can easily be done in time $O(\abs{Y}^3)$.

The VC-dimension of~3 for that range space holds also for abstract order types (see also~\cite{gaertner_welzl}).
A subspace oracle for semispaces of a given set can easily be implemented using the definition of a semispace by allowable sequences~\cite{semispaces};
for each pair $f,g \in Y, f \prec g,$ in the dual pseudo-line arrangement, we report the pseudo-lines above the crossing $fg$ and, say, $f$ as a semispace, and the pseudo-lines below $fg$ and $g$ as a second semispace.

Consider again the set $U_B$ of pseudo-lines above the crossing $pq$ in~$\A$.
Using Theorem~\ref{thm_vc_linear_time}, we obtain an $\varepsilon$-approximation $A \subset U$ for the range space of semispaces, i.e., the pseudo-lines of~$U$ above and below a point in~$\A$.
$A$ is of constant size for a fixed~$\varepsilon$.
For each pseudo-line $o \in A$, we obtain the crossing of~$o$ with the pseudo-lines in~$L(pq)$ that defines the rank $\rk_{pq}(o)$ in $O(n)$ time.
Let $u_A \in A$ have the median rank among the elements of~$A$.
Then not less than $1/2-\varepsilon$ pseudo-lines of $U$ are above and below the crossing $bu_A$ on the upper envelope of $L(pq)$;
we may prune a constant fraction of the elements in~$U_B$.

\subsubsection{Analysis}
In each iteration, our problem consists of the remaining pseudo-lines in~$U$ and in~$L(pq)$, plus a constant number of additional pseudo-lines (i.e., $p, q,$ and the pseudo-lines needed by the oracle, in our case the pseudo-line~$x$).
Let~$n$ be the number of these pseudo-lines.
In each iteration we prune the larger of $U$ and $L(pq)$.
In both cases, we remove at least half of the pseudo-lines on one side of $pq$, and therefore $n/4-O(1)$ pseudo-lines in each iteration.
Since each iteration takes $O(n)$ time, we have overall a linear-time prune-and-search algorithm.

\begin{theorem}\label{thm_slope_abstract}
Given an arrangement~$\A$ of pseudo-lines, a subset~$B$ of its pseudo-lines, a crossing $pq$, and a natural number $k \leq |B|$, the pseudo-line~$m \in B$ of rank~$k$ in $B$ on the pseudo-vertical through $pq$, i.e., $\rk_{pq}(m,B)$, can be found in linear time using only sidedness queries on the corresponding abstract order type.
\end{theorem}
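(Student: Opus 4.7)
The plan is to follow the deterministic prune-and-search strategy outlined in the preceding subsections. The overall algorithm iterates on a shrinking instance consisting of the current $U_B$, $L(pq)$, and a constant number of auxiliary pseudo-lines (namely $p$, $q$, and the extreme-point pseudo-line $x$ required by the oracle of Section~\ref{sec_arrangement_oracle}), each round halving the size of the larger of the two sides of $pq$. First I would install the oracle of Section~\ref{sec_arrangement_oracle} so that above/below and $\prec$ queries run in constant time. Then, with a linear scan, count how many elements of $B$ lie above versus below $pq$ to decide on which side $m$ lies; by symmetry assume $m$ is above and restrict attention to $U_B$ together with $L(pq)$, deleting pseudo-lines below $pq$ that are not in $L(pq)$ since they do not influence the northbound ray of $\GA{pq}$. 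If an $O(n)$ pass shows that $m$ starts above every element of $L(pq)\cup\{p\}$, Observation~\ref{obs_rank_above_envelope} lets me finish by calling the standard linear-time selection algorithm on the $\prec$-order.

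A single pruning iteration then proceeds as follows. If $|L(pq)|\ge |U_B|$, I prune $L(pq)$ by invoking the deterministic subroutine of~\cite{extreme_journal}: locate the median intersection $t$ of $L(pq)$ with $p$, partition into the $\prec$-left part $L$ and the $\prec$-right part $R$, clean $L$ against the minimum of $R$, and then use the linear-time convex-hull-edge procedure to find the unique envelope crossing $vw$ with $v\in L$, $w\in R$. Counting the elements of $U_B$ above and below $vw$ in $O(n)$ queries tells me which of $L$ or $R$ cannot contain the pseudo-line $e$ where $m$ joins the upper envelope, and that half is discarded. If instead $|U_B|>|L(pq)|$, I prune $U_B$: the semispace range space on $U_B$ has VC-dimension $3$, and its subspace oracle is implementable with $O(|Y|^3)$ sidedness queries by enumerating the splits produced by all pairs in $Y$, so Theorem~\ref{thm_vc_linear_time} delivers, in $O(|U_B|)$ time, a constant-size $\varepsilon$-approximation $A\subseteq U_B$ for some fixed $\varepsilon<1/2$. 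For each $o\in A$ I find the unique pseudo-line of $L(pq)\cup\{p\}$ through which $o$ pierces the upper envelope, giving $\rk_{pq}(o,U_B)$ in $O(|L(pq)|)$ queries; then the median-rank element $u_A$ of $A$ yields an envelope crossing that splits $U_B$ into two parts, each of size at least $(1/2-\varepsilon)|U_B|$, and the half on the wrong side of $k$ is discarded (updating $k$ in the ``above'' case).

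The main obstacle is certifying that each iteration removes $\Omega(n)$ pseudo-lines while still using only $O(n)$ queries and without disturbing the answer. For $L(pq)$ the guarantee is inherited from~\cite{extreme_journal}; for $U_B$ it follows from the $\varepsilon$-approximation property, since at most $\varepsilon|U_B|$ elements of $U_B$ can disagree with $A$ on whether they have rank below or above $\rk_{pq}(u_A,U_B)$. One also needs the routine observation that pruning does not change $\rk_{pq}(m,B)$: discarding elements of $U_B$ never touches $\GA{pq}$ itself, while discarding elements of $L(pq)$ can reshape the northbound ray but only outside the portion that contributes to the envelope crossing $em$, so $m$ keeps the same rank in the new arrangement, as noted in the Remark following the randomized analysis. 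Because each iteration costs $O(n)$ queries and shrinks the dominant side by a constant factor, the total cost telescopes to $O(n)$, yielding the claimed linear-time selection algorithm in the purely sidedness-query model.
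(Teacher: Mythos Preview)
Your proposal is correct and follows essentially the same deterministic prune-and-search argument as the paper: the same oracle setup, the same initial reduction to $U_B$ and $L(pq)$, the same use of the procedure from~\cite{extreme_journal} to prune $L(pq)$ via the envelope crossing $vw$, the same $\varepsilon$-approximation of the semispace range space (VC-dimension~$3$, Theorem~\ref{thm_vc_linear_time}) to prune $U_B$, and the same telescoping running-time analysis. The only cosmetic difference is that you take the approximation over $U_B$ while the paper writes it over $U$, but this does not affect the argument.
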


\paragraph*{Remark.}
Several years before the linear-time algorithm for ham-sandwich cuts~\cite{lo_matousek_steiger} was developed, Megiddo~\cite{megiddo} considered the following restricted version of the ham-sandwich cut problem.
Given a set of red and a set of blue points with disjoint convex hulls, find a line that bisects both the red and the blue point set.
Actually, the resulting line does not have to be a bisector, but the number of red and blue points on one side of the line can be chosen arbitrarily.
In the dual representation, we are given $m$ blue lines with positive slope and $n$ red lines with negative slope, and we want to find the intersection point between a $k_1$-level in the blue lines and the $k_2$-level in the red lines.
If we consider the pseudo-lines in $L(pq) \cup \{p\}$ as red pseudo-lines and the pseudo-lines in $U$ as the blue ones, we are looking for the intersection point between the $k$-level in $U$ and the 1-level in $L(pq) \cup \{p\}$.
However, Megiddo's algorithm also depends on the realization of the line arrangement;
the algorithm requires selecting the median of a subset of crossings ordered by their $x$-coordinate and selecting the intersections of a given rank at vertical lines.
These problems also have to be solved when abstracting the general ham-sandwich cut algorithm.

\section{Revisiting the Ham-Sandwich Cut Algorithm}
\label{sec_ham_sandwich_revisited}

In this section, we describe an application of pseudo-verticals for a bisection algorithm, namely the linear-time ham-sandwich cut algorithm by Lo, Matou\v{s}ek, and Steiger~\cite{lo_matousek_steiger} (the LMS algorithm).
To this end, we revisit the description given in~\cite{lo_matousek_steiger}; we adapt some of the terminology (like replacing ``line'' with ``pseudo-line'') and argue for the correspondence between entities in the original description and their abstract counterpart.

Lo, Matou\v{s}ek, and Steiger~\cite{lo_matousek_steiger} describe two different variants of the algorithm, one for points in the plane and the other one for points in arbitrary dimension (their work is a generalization of a 2-dimensional version by Lo and Steiger presented in~\cite{lo_steiger}).
For the 2-dimensional case, a result by Matou\v{s}ek~\cite[Theorem~3.2]{matousek_construction} is used for appropriately selecting a set of vertical lines.
In higher dimensions, they use a different approach (given in~\cite{matousek_approximations_conf}) based on an $\varepsilon$-approximation with the ranges being defined as sets of hyperplanes that are stabbed by segments (we will give a formal definition later).
While the higher-dimensional variant appears to be less instructive, it is easier to apply to our setting.
We therefore will use this variant for our 2-dimensional setting; here, we do not give the description for arbitrary dimension, but transcribe it to dimension~2 only.
Still, our exposition closely follows~\cite{lo_matousek_steiger}, while merely pointing out the parts where the applicability to our abstract setting might not be obvious.

Let $P$ be a finite set of $n$ points in the Euclidean plane.
A line~$h$ \emph{bisects}~$P$ if no more than $n/2$ points lie in either of the open half-planes defined by~$h$.
We call $h$ a \emph{bisector}.
If $P$ is a disjoint union of two point sets $P_1, P_2$, a \emph{ham-sandwich cut} is a line that simultaneously bisects both $P_1$ and $P_2$ (a \emph{red} and a \emph{blue} set).
This definition extends to abstract order types in a natural way.
It is well-known that a ham-sandwich cut always exists.
Let $T$ be an interval on the $x$-axis, and let $V(T)$ be the vertical slab between the two vertical lines defining~$T$.
The interval has the \emph{odd intersection property} with respect to the levels $\lambda_1$ and $\lambda_2$ if $|(\lambda_1 \cap \lambda_2) \cap V(T)|$ is odd.
If $k = \floor{(n+1)/2}$, the $k$-level is called \emph{median level}.
In our case, each slab is defined by two pseudo-verticals.

The algorithm works in a prune-and-search manner.
Let us first consider the setting where we are given an actual set of points in~$\euclid^2$.
In every iteration, we are given
\begin{itemize}
 \item an interval $T$ on the $x$-axis,
 \item two sets $G_1$ and $G_2$ of lines dual to a subset of points in $P_1$ and $P_2$, respectively, with $|G_1| = n_1$ and $|G_2| = n_2$, and
 \item two integers $k_1$ and $k_2$, with $1 \leq k_1 \leq n_1$ and $1 \leq k_2 \leq n_2$, denoting the $k_1$-level $\lambda_1$ and the $k_2$-level~$\lambda_2$, respectively.
\end{itemize}
Further, we know that $T$ has the odd intersection property for the $k_1$-level and the $k_2$-level.
We denote the arrangements corresponding to $G_1$ and $G_2$ with $\A_1$ and $\A_2$, respectively.
Initially, $\lambda_1$ and $\lambda_2$ are the median levels of the two arrangements (for which the odd intersection property holds).
Without loss of generality, suppose $n_1 \geq n_2$.
The algorithm consists of the following four steps:
\begin{enumerate}
 \item Divide $T$ into a constant number of subintervals $T_1, \dots, T_C$, to limit the number of pseudo-lines that are on $\lambda_1$ within each subinterval.
 \item Find a subinterval $T_j$ with the odd intersection property.
 \item Construct a trapezoid $\tau \subset V(T_j)$ such that
 \begin{enumerate}
  \item $\lambda_1 \cap V(T_j) \subset \tau$, and
  \item at most half of the lines of $\A_1$ intersect $\tau$.
 \end{enumerate}
 \item Discard the lines of $\A_1$ that do not intersect $\tau$, update $k_1$ accordingly and continue within the interval~$T_j$.
\end{enumerate}

In our abstract setting, the interval~$T$ is given by a pair of pseudo-verticals.
Recall that there is a total order on the pseudo-verticals of a pseudo-line arrangement.
The trapezoid~$\tau$ will also be replaced by a corresponding structure that will be described later.

\subsection{Obtaining Intervals}
Step~1 in the algorithm is the one that is technically most involved.
The straight-line version can be solved using the following result by Matou\v{s}ek.%
\footnote{Lo, Matou\v{s}ek, and Steiger~\cite{lo_matousek_steiger} refer to~\cite{matousek_construction}, where~\cite[Lemma~4.5]{haussler_welzl} (Lemma~\ref{lem_corridors} herein) is used, and also refer to~\cite{matousek_approximations_conf} in this context, where a general algorithm for constructing $\varepsilon$-approximations is given.
}

\begin{lemma}[Matou\v{s}ek]\label{lem_segment_ranges}
Let $H$ be a collection of $n$ hyperplanes in $\euclid^d$ and let $\mathcal{R}$ be all subsets of $H$ of the form $\{h \in H : h \cap s \neq \emptyset \}$ where $s$ is a segment in~$\euclid^d$.
An $\varepsilon$-approximation for the range space $(H,\mathcal{R})$ of size $O(\varepsilon^{-2} \log (1/\varepsilon))$ can be computed in time $O(f(\varepsilon)n)$, where $f(\varepsilon)$ is a factor depending on $\varepsilon$ and $d$ only.
\end{lemma}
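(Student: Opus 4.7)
The plan is to reduce to the general $\varepsilon$-approximation machinery that the paper has already invoked via Theorem~\ref{finite_vc} and Theorem~\ref{thm_vc_linear_time}. Concretely, I would verify two things about the range space $(H,\mathcal{R})$: that its VC-dimension (equivalently, the exponent in its shatter function) is bounded by a constant depending only on $d$, and that it admits a subspace oracle in the sense of Definition~\ref{def_subspace_oracle}. Given these, the size bound $O(\varepsilon^{-2}\log(1/\varepsilon))$ drops out of Theorem~\ref{finite_vc} and the linear-time construction follows from Theorem~\ref{thm_vc_linear_time} applied with $r=\lceil 1/\varepsilon\rceil$; here the constant prefactor depending on $\varepsilon$ and $d$ collects the $(r^2\log r)^{d'}$ factor from Theorem~\ref{thm_vc_linear_time}.

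For the VC-dimension bound, I would use the observation that, assuming general position, a hyperplane $h\in H$ meets a segment $s$ with endpoints $a,b$ if and only if $a$ and $b$ lie in opposite open halfspaces defined by $h$. Writing $R_p=\{h\in H: p\text{ is above }h\}$ for a point $p\in\euclid^d$, every range in $\mathcal{R}$ is a symmetric difference $R_a\triangle R_b$. The dual halfspace range space $(H,\{R_p:p\in\euclid^d\})$ has VC-dimension bounded by a function of $d$, and the range space whose ranges are symmetric differences of two ranges from a range space of bounded VC-dimension again has bounded VC-dimension (a classical closure property of VC classes). In particular, the shatter function $\pi_{\mathcal{R}}(n)$ is polynomial in $n$ with exponent depending only on $d$.

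For the subspace oracle, given $Y\subseteq H$ with $|Y|=m$ I would enumerate the ranges combinatorially. The relevant information about an endpoint $p$ of $s$ is only its sign vector with respect to $Y$, which is constant inside each cell of the arrangement $\mathcal{A}(Y)$ in $\euclid^d$. Since $\mathcal{A}(Y)$ has $O(m^d)$ cells, there are at most $h=O(m^{2d})$ candidate ranges, one per ordered pair of cells. Picking one witness point per cell and computing its above/below vector costs $O(m^{d+1})$ in total, and then assembling each symmetric difference $R_a\triangle R_b$ takes $O(m)$ time, meeting the $O(|Y|\cdot h)$ requirement of Definition~\ref{def_subspace_oracle}.

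The main obstacle I expect is the symmetric-difference closure argument: one must be careful that the two ranges $R_a$ and $R_b$ contributing to a given range of $\mathcal{R}$ are parameterized in a coupled way (as endpoints of a single segment), rather than independently. The standard proofs of closure under symmetric difference (via Sauer's lemma applied to the product range space, then pulled back along the diagonal-free projection) go through in our setting, because no matter how we couple the two points, any set shattered by $\mathcal{R}$ is also shattered by the larger range space of unrestricted symmetric differences. Modulo citing this closure property and the polynomial bound on the number of cells of a hyperplane arrangement in $\euclid^d$, the rest of the proof is a direct invocation of Theorem~\ref{thm_vc_linear_time}.
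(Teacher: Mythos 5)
Your proposal is correct and follows essentially the same route as the paper, which treats this statement as a cited result of Matou\v{s}ek and justifies it (in the footnote and in the subsequent pseudo-line analogue, Corollary~\ref{cor_corridors_abs}) by exactly your two ingredients: a range for a segment is the symmetric difference of the two halfspace ranges at its endpoints, so bounded VC-dimension follows from the Haussler--Welzl closure lemma (Lemma~\ref{lem_corridors} with $k=2$ is precisely closure under symmetric difference), and the construction is Matou\v{s}ek's general linear-time $\varepsilon$-approximation algorithm (Theorem~\ref{thm_vc_linear_time}) with a subspace oracle obtained by enumerating arrangement cells. Your worry about the ``coupled'' endpoints is a non-issue resolved exactly as you say: the ranges form a subfamily of all symmetric differences of halfspace ranges, and VC-dimension is monotone under taking subfamilies.
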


Let us go into the details why this lemma also holds for arrangements of pseudo-lines.
To this end, a general result by Haussler and Welzl~\cite{haussler_welzl} is used.
\begin{lemma}[{Haussler, Welzl~\cite[Lemma~4.5]{haussler_welzl}}]\label{lem_corridors}
Assume $k \geq 1$ and $(X,\mathcal{R})$ is a range space of VC-dimension $d \geq 2$.
Let~$\mathcal{R}'$ be the set of all sets of the form $\bigcup_{i=1}^k R_i - \bigcap_{i=1}^k R_i$, where $R_i$ is a range in~$\mathcal{R}$, $1 \leq i \leq k$.
Then $(X, \mathcal{R}')$ has VC-dimension less than $2dk \log(dk)$.
\end{lemma}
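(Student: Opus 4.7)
The plan is to combine the Sauer-Shelah lemma with a direct counting argument that limits how many distinct subsets of a finite $Y \subseteq X$ can be produced by the operation $\bigcup_{i=1}^{k} R_i - \bigcap_{i=1}^{k} R_i$ on $k$ ranges of $\mathcal{R}$. The whole argument is a ``projection'' proof: we bound the shatter function of $(X, \mathcal{R}')$ in terms of that of $(X, \mathcal{R})$ and then turn the inequality $2^n \leq \pi_{\mathcal{R}'}(n)$ into an upper bound on the size $n$ of any shattered set.

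First I would invoke Sauer-Shelah: any range space of VC-dimension $d$ satisfies $\pi_{\mathcal{R}}(n) \leq (en/d)^d$ for $n \geq d$. Next, for any $Y \subseteq X$ with $|Y| = n$, I would observe that a set $S \in \mathcal{R}'$ restricted to $Y$ is completely determined by the $k$ restrictions $R_i \cap Y$, since $S \cap Y = \bigcup_{i=1}^{k}(R_i \cap Y) - \bigcap_{i=1}^{k}(R_i \cap Y)$. Each $R_i \cap Y$ ranges over at most $\pi_{\mathcal{R}}(n)$ choices, so
\[
\pi_{\mathcal{R}'}(n) \leq \pi_{\mathcal{R}}(n)^k \leq (en/d)^{dk}.
\]

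If $Y$ is shattered by $\mathcal{R}'$ then $\pi_{\mathcal{R}'}(n) \geq 2^n$, so combining the two bounds gives $2^n \leq (en/d)^{dk}$, i.e.\
\[
n \leq dk \log_2(en/d).
\]
The main (and only real) obstacle is the elementary but slightly fiddly algebra that turns this into $n < 2dk \log(dk)$. I would proceed by contradiction: assume $n \geq 2dk \log(dk)$ and substitute this lower bound for $n$ back into $n \leq dk \log_2(en/d)$; using $d \geq 2$ and $k \geq 1$, the right-hand side grows only like $dk \log(dk) + O(dk \log\log(dk))$, which is strictly smaller than $2dk \log(dk)$ for all admissible values of $d$ and $k$. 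This yields the desired contradiction.

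Hence no subset of $X$ of size $2dk \log(dk)$ can be shattered by $\mathcal{R}'$, and the VC-dimension of $(X, \mathcal{R}')$ is strictly less than $2dk \log(dk)$, as claimed. I expect the projection step to be conceptually the important one, while the concluding inequality is the step most at risk of bookkeeping errors; a careful choice of constants at the start (e.g.\ separating the $\log(dk)$ term from the $\log\log$ term explicitly) keeps the algebra under control.
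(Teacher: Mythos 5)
This lemma is quoted verbatim from Haussler and Welzl~\cite[Lemma~4.5]{haussler_welzl}; the paper gives no proof of it, so there is nothing in-paper to compare your argument against. Your overall route --- bound $\pi_{\mathcal{R}'}(n)\le \pi_{\mathcal{R}}(n)^k$ by observing that $\bigl(\bigcup_i R_i - \bigcap_i R_i\bigr)\cap Y$ is determined by the traces $R_i\cap Y$, then apply Sauer--Shelah and solve $2^n\le \pi_{\mathcal{R}}(n)^k$ for $n$ --- is the standard and correct way to prove such composition bounds, and the projection step is fine.

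The genuine gap is in the concluding algebra. Your justification is the claim that substituting $n=2dk\log_2(dk)$ into $n\le dk\log_2(en/d)$ yields a contradiction ``for all admissible values of $d$ and $k$.'' That is false for small $dk$. Take $d=2$, $k=2$, so the target bound is $2dk\log_2(dk)=16$. Then $dk\log_2(en/d)=4\log_2(8e)\approx 17.77>16$, so no contradiction arises at $n=16$ (nor at $n=17$ or $18$); your chain of inequalities only rules out shattered sets of size $19$ or more. More generally, writing $m=dk$, your substitution needs $m\bigl(1+\log_2\log_2 m+\log_2(e/d)\bigr)<m\log_2 m$, which fails for all $m\le 8$. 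To close the gap you must either use the exact Sauer--Shelah bound $\pi_{\mathcal{R}}(n)\le\sum_{i=0}^{d}\binom{n}{i}$ (which, e.g., gives $\Phi_2(16)^2=137^2<2^{16}$ and does eliminate $n=16$ in the example above) together with a sharper numerical estimate, or explicitly verify the finitely many small values of $dk$ where the asymptotic comparison is not strict. As written, the proof establishes only a bound of the form $2dk\log_2(dk)$ plus lower-order terms, not the stated strict inequality.
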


We already discussed that a range space defined by the semispaces of an abstract order type has VC-dimension~3.
We can combine this fact with Lemma~\ref{lem_corridors} in the following way.
Consider two semispaces $S_1$ and $S_2$ of an abstract order type, defined by the pseudo-lines above two points $p_1$ and $p_2$ in the corresponding pseudo-line arrangement~$\A$ (for simplicity, suppose that none of $p_1$ and $p_2$ lie on a pseudo-line of~$\A$).
Let $R = (S_1 \cup S_2) \setminus (S_1 \cap S_2)$.
Then $R$ consists exactly of the pseudo-lines that separate $p_1$ from $p_2$.
By the Levi Enlargement Lemma (see \cite{levi}), we can extend~$\A$ by a pseudo-line through any two points, and therefore obtain a pseudo-line $\chi$ for~$\A$ containing both $p_1$ and $p_2$.
Consider the part of $\chi$ between $p_1$ and $p_2$.
We call such a part a \emph{pseudo-segment}.
The pseudo-lines crossed by this pseudo-segment are exactly those in~$R$.
Applying Lemma~\ref{lem_corridors}, we can therefore obtain a range space that is defined by the pseudo-lines that can be crossed by pseudo-segments from the range space defined by the semispaces;
this new range space has again finite VC-dimension.
(Note that, while we explained the application of Lemma~\ref{lem_corridors} using points $p_1$ and $p_2$, the argument also holds for pseudo-segments defined by crossings of~$\A$, as the endpoints of the pseudo-segment can be perturbed to be in one of the four cells adjacent to a crossing.)
Using Matou\v{s}ek's linear-time algorithm for obtaining an $\varepsilon$-approximation~\cite[Theorem~4.1]{matousek_approximations_conf}, we can state the following counterpart to Lemma~\ref{lem_segment_ranges} for abstract order types in the plane.

\begin{corollary}\label{cor_corridors_abs}
Let $\Sigma = (X, \mathcal{R})$ be a range space where $X$ is the set of pseudo-lines in a pseudo-line arrangement~$\A$ and $\mathcal{R}$ consists of the sets of pseudo-lines of~$\A$ that are crossed by pseudo-segments obtained on~$\A$.
Then a $(1/r)$-approximation of constant size for $\Sigma$ can be computed in $O(|X|)$ time for a given $r \geq 2$.
\end{corollary}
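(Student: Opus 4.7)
The plan is to reduce Corollary~\ref{cor_corridors_abs} to Matou\v{s}ek's general linear-time $\varepsilon$-approximation algorithm (Theorem~\ref{thm_vc_linear_time}). The discussion preceding the corollary already shows that the set of pseudo-lines crossed by a pseudo-segment with endpoints $p_1, p_2$ coincides with the symmetric difference $(S_1 \cup S_2) \setminus (S_1 \cap S_2)$ of the two semispaces $S_1, S_2$ above those endpoints. Thus $\mathcal{R}$ is contained in the $k=2$ symmetric-difference family $\mathcal{R}'$ produced by Lemma~\ref{lem_corridors} from the semispace range space. Since semispaces of an abstract order type have VC-dimension~$3$, Lemma~\ref{lem_corridors} bounds the VC-dimension of $(X,\mathcal{R}')$, and hence of $\Sigma$, by a constant; in particular the shatter function $\pi_\mathcal{R}(n)$ satisfies $\pi_\mathcal{R}(n)\in O(n^d)$ for some constant~$d$, which is the first hypothesis of Theorem~\ref{thm_vc_linear_time}.

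To apply that theorem I still need a subspace oracle for $\Sigma$ in the sense of Definition~\ref{def_subspace_oracle}. I will build it on top of the semispace oracle described in Section~\ref{sec_selecting_pseudo_line}: for a given $Y\subseteq X$, that oracle lists the $O(|Y|^2)$ semispaces of~$Y$ using only sidedness queries. Iterating over all unordered pairs of listed semispaces and outputting the symmetric difference of each pair (computed in $O(|Y|)$ time via a label array) yields every distinct trace $Y \cap R$ with $R \in \mathcal{R}$, including those arising from the perturbations of endpoints into the four cells adjacent to each crossing that are used in the discussion before the corollary. The total running time is polynomial in $|Y|$, which respects the $O(|Y|^{d+1})$ bound inherited from the shatter function, and each output range has size at most $|Y|$.

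Given the subspace oracle, Theorem~\ref{thm_vc_linear_time} applied with parameter $r$ produces a $(1/r)$-approximation of $\Sigma$ of size $O(r^2\log r)$ in time $O(|X|(r^2\log r)^d)$; for any fixed $r \geq 2$ this is $O(|X|)$, as claimed. The step I expect to be the main obstacle is verifying cleanly that allowing pseudo-segments to have endpoints at crossings (and thus on pseudo-lines of~$\A$) does not enlarge $\mathcal{R}$ beyond the family of symmetric differences of semispaces of \emph{perturbed} endpoints; once the perturbation argument in the preceding paragraphs is interpreted carefully, the remainder of the proof is a direct appeal to the Matou\v{s}ek machinery.
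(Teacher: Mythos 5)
Your proposal is correct and follows essentially the same route as the paper: pseudo-segment ranges are symmetric differences of two semispaces (via the Levi enlargement / perturbation argument), so Lemma~\ref{lem_corridors} applied to the VC-dimension-3 semispace range space bounds the VC-dimension of $\Sigma$, and Theorem~\ref{thm_vc_linear_time} then yields the linear-time $(1/r)$-approximation. Your explicit construction of the subspace oracle for $\Sigma$ (pairing up the semispaces returned by the semispace oracle and taking symmetric differences) is a detail the paper leaves implicit, but it is exactly the intended instantiation.
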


For future reference, let us call this range space the \emph{pseudo-segment range space} of the arrangement.

For the ham-sandwich cut algorithm, we can now proceed in the following way.
Using Corollary~\ref{cor_corridors_abs}, we obtain an $\varepsilon$-approximation $A$ for the pseudo-segment range space of~$\A_1$;
we choose $\varepsilon = 1/12$ with foresight.
Sort the crossings of $A$ by the order implied by the pseudo-verticals through the crossings on the original arrangement~$\A$.
Since $A$ is of constant size, this can be done in~$O(|\A|)$ time.
We use Theorem~\ref{thm_slope_abstract} to determine the $k_1$-level of $\A_1$ at the pseudo-vertical of each crossing in~$A$.
Hence, for each pseudo-vertical, we get a crossing in $\A$ that has level $k_1$ in $\A_1$.
Counting the elements of $\A_2$ above each such crossing allows us to find a crossing $pq$ and a crossing $p'q'$ consecutive in~$A$ with the odd intersection property.
We again use Theorem~\ref{thm_slope_abstract} to select the pseudo-lines that are of rank $k_1 - c\varepsilon n_1$ and $k_1 + c \varepsilon n_1$ in~$\A_1$ at the pseudo-vertical $\GA{pq}$, and do the same at $\GA{p'q'}$ for a constant $c$ (we fix $c = 3/2$ with foresight).
Hence, we have six crossings in $\A$ of which we know the level within $\A_1$.
Let $g_\mathrm{l}$ and $g_\mathrm{r}$ be the crossings at the $k_1$-level along $\GA{pq}$ and $\GA{p'q'}$, respectively.
We denote the crossings at the $(k_1 - c \varepsilon n_1)$-level by $d_\mathrm{l}^-$ and $d_\mathrm{r}^-$.
Their counterparts at the $(k_1 + c \varepsilon n_1)$-level are denoted by $d_\mathrm{l}^+$ and $d_\mathrm{r}^+$.

\subsection{Properties of a Trapezoid-Like Structure}
In the original LMS algorithm~\cite{lo_matousek_steiger}, the points at the given levels were determined by the intersections of the levels with the vertical lines.
These points formed a trapezoid.
However, the actual properties used are the ones of the points and not the ones of the trapezoid as a geometric object.
In this part, we reproduce the line of arguments used in~\cite{lo_matousek_steiger} to show that at least half of the pseudo-lines in~$\A_1$ are either above both $d_\mathrm{l}^-$ and $d_\mathrm{r}^-$ or below both $d_\mathrm{l}^+$ and $d_\mathrm{r}^+$, and that these pseudo-lines are not on the $k_1$-level between $g_\mathrm{l}$ and $g_\mathrm{r}$.

Consider the arrangement $\A_1$.
We bound the number of pseudo-lines that separate $d_\mathrm{l}^-$ from $d_\mathrm{r}^-$, i.e., the pseudo-lines crossing a pseudo-segment between $d_\mathrm{l}^-$ and $d_\mathrm{r}^-$.
The levels of $d_\mathrm{l}^-$ and $d_\mathrm{r}^-$ are the same.
Therefore, the numbers of pseudo-lines of the approximation~$A$ above these two crossings differ by at most $2\varepsilon\abs{A}$.
If there would be more than $2\varepsilon\abs{A}$ pseudo-lines of $A$ separating $d_\mathrm{l}^-$ from $d_\mathrm{r}^-$, then at least one of these pseudo-lines would have to be above $d_\mathrm{l}^-$ and below $d_\mathrm{r}^-$, and another one would have to be below $d_\mathrm{l}^-$ and above $d_\mathrm{r}^-$.
Hence, the crossing between these two pseudo-lines would have to be in the interval between $\GA{pq}$ and $\GA{p'q'}$.
But this contradicts the choice of $\GA{pq}$ and $\GA{p'q'}$, as there is no pseudo-vertical through a crossing of two pseudo-lines of~$A$ between them.
Hence, any pseudo-segment between $d_\mathrm{l}^-$ and $d_\mathrm{r}^-$ crosses at most $2\varepsilon\abs{A}$ of the pseudo-lines in~$A$.
By the $\varepsilon$-approximation property, at most $3\varepsilon n_1$ pseudo-lines of $\A_1$ intersect such a pseudo-segment.

Suppose there is a pseudo-line~$w$ of $\A_1$ that is above both $d_\mathrm{l}^-$ and $d_\mathrm{r}^-$, but still $w$ is an element of the $k_1$-level between $g_\mathrm{l}$ and $g_\mathrm{r}$.
The part of the arrangement where this can happen is bounded by $\GA{pq}$ and $\GA{p'q'}$.
Then also any pseudo-segment~$s$ between $d_\mathrm{l}^-$ and $d_\mathrm{r}^-$ would have to cross the relevant part of the $k_1$-level (recall that the pseudo-segment can be considered as a part of a pseudo-line in an extended arrangement).
At both $d_\mathrm{l}^-$ and $d_\mathrm{r}^-$, the pseudo-segment~$s$ is at level $(k_1 - c \varepsilon n_1)$, and therefore has to cross $2(k_1 - (k_1 - c \varepsilon n_1)) = 2c\varepsilon n_1$ pseudo-lines to reach the $k_1$-level and then return to level $(k_1 -c \varepsilon n_1)$.
By the choice of $c$, this is exactly $3\varepsilon n_1$, the maximum number of crossings the pseudo-segment~$s$ can have.
Hence, $s$ cannot go below the $k_1$-level and therefore the pseudo-line~$w$ cannot intersect the $k_1$-level.
For our prune-and-search approach, we can therefore remove all pseudo-lines of $\A_1$ that are above both $d_\mathrm{l}^-$ and $d_\mathrm{r}^-$, and, by symmetric arguments, can do the same for the ones below both $d_\mathrm{l}^+$ and~$d_\mathrm{r}^+$.

It remains to count how many pseudo-lines are removed.
There are exactly $2c \varepsilon n_1$ pseudo-lines separating $d_\mathrm{l}^+$ from $d_\mathrm{l}^-$, as well as $d_\mathrm{r}^+$ from $d_\mathrm{r}^-$.
Further, we argued that there are at most $3\varepsilon n_1$ pseudo-lines between $d_\mathrm{l}^-$ and $d_\mathrm{r}^-$, as well as between $d_\mathrm{l}^+$ and $d_\mathrm{r}^+$.
This amounts to $(4c+6) \varepsilon n_1 = 12 \varepsilon n_1$, where each pseudo-line is counted twice.
Therefore, we have to keep at most $6 \varepsilon n_1 = n_1/2$ pseudo-lines of $\A_1$.

\subsection{A Note on the Intervals}
After having pruned a linear fraction of pseudo-lines, the algorithm performs another iteration within a smaller interval for which the odd-intersection property holds.
Note that we need to continue within this interval, as the $k_1$-level (for an updated $k_1$) equals the median-level only within that region.
In the geometric variant, the interval was explicitly given by the vertical lines of the current slab.
For pseudo-verticals, we have no such fixed position, and actually the pseudo-verticals will, in general, be different when pseudo-lines are removed from the arrangement (even the relative order of two crossings may change).
However, we can safely define the interval for the subproblem by the two crossings $g_\mathrm{l}$ and~$g_\mathrm{r}$, as the odd intersection property can be seen as a property of two points on one of the two levels (only the number of pseudo-lines of $\A_2$ above each of the two points is relevant here).
It is interesting to observe that also $\GA{g_\mathrm{l}}$ and $\GA{g_\mathrm{r}}$ do not have a different relative position in the new arrangement.

\paragraph*{}
With the proof of Theorem~\ref{thm_slope_abstract} and the observations implying that all the remaining parts of the LMS algorithm actually do not depend on the geometric realization, we obtain Theorem~\ref{thm_abstract_ham_sandwich} as our main result.

\section{Conclusion}
In this paper, we defined a possible replacement of a vertical line in line arrangements for arrangements of pseudo-lines and showed that it fulfills important algorithmic properties.
In particular, we were able to show how to select the $k$th pseudo-line crossed by this pseudo-vertical line and the crossing where this pseudo-line (locally) enters the $k$-level in linear time, using only sidedness queries.
As an application, we showed how these pseudo-vertical lines replace vertical lines in the linear-time ham-sandwich cut algorithm by Lo, Matou\v{s}ek, and Steiger~\cite{lo_matousek_steiger}.

In essence, the order of the pseudo-verticals through all crossings of a pseudo-line arrangement fix one specific allowable sequence for an abstract order type.
Theorem~\ref{thm_slope_abstract} allows us to select certain elements of a permutation in that allowable sequence in linear time.
We have seen that this approach is a generalization of the result presented in~\cite{extreme_journal}.
(Note that the oracle also uses the extreme point~$x$, which, in turn, requires applying Theorem~\ref{thm_slope_abstract}; any internal representation that can answer queries of the form $a \prec b$ in constant time allows us to find an extreme point in linear time.)

Compared to $L(pq)$, pruning $U_B$ deterministically required the technically involved step of selecting an $\varepsilon$-approximation.
Is there a more ``light-weight'' deterministic way to prune $U_B$, similar to the one used for $L(pq)$?

\bibliographystyle{abbrv}
\bibliography{bibliography}
\end{document}